\documentclass[11pt]{article}
\usepackage[utf8]{inputenc}
\usepackage[english]{babel}
\usepackage[active]{srcltx}
\usepackage{hyperref}
\usepackage[all]{xy}
\usepackage{amsfonts}
\usepackage{amsthm}
\usepackage{enumerate}
\usepackage{mathrsfs}
\usepackage{multicol}
\usepackage[all]{xy}
\usepackage{amsfonts}
\usepackage{latexsym,amsmath,amssymb}
\usepackage[usenames,dvipsnames]{color}
\usepackage{pstricks}
\usepackage[absolute,overlay]{textpos}
\usepackage{graphics,wrapfig,times}
\usepackage{xfrac}
\usepackage[usenames]{color}

\usepackage{color}
\definecolor{DarkBlue}{rgb}{0.1,0.1,0.5}
\definecolor{Red}{rgb}{0.9,0.1,0.1}
\definecolor{Green}{rgb}{0.3,0.7,0.0}
\definecolor{green2}{rgb}{0.1,0.7,0.2}
\definecolor{blue2}{rgb}{0.0,0.6,0.7}
\definecolor{pink}{rgb}{1,0.0,1}
\definecolor{orange}{rgb}{0.9,0.0,0.1}

\newtheorem{theorem}{Theorem}

\newtheorem{proposition}{Proposition}

\newtheorem{definition}{Definition}


\renewcommand{\d}{\mathrm{d}}

\renewcommand{\d}{\mathrm{d}}
\newcommand{\W}{\mathcal{W}}
\newcommand{\derpar}[2]{\displaystyle\frac{\partial{#1}}{\partial{#2}}}

\newcommand{\Lag}{\mathcal{L}}
\newcommand{\Leg}{\mathcal{FL}}

\newcommand{\df}{\Omega}

\newcommand{\Tan}{\mathrm{T}}
\newcommand{\inn}{{\mathop{i}\nolimits}}
\newcommand{\Lie}{\mathop{\mathrm{L}}\nolimits}

\newcommand{\bal}{\begin{align*}}
\newcommand{\eal}{\end{align*}}
\def\beq{\begin{equation}}
\def\eeq{\end{equation}}
\def\bea{\begin{eqnarray}}
\def\eea{\end{eqnarray}}
\def\beann{\begin{eqnarray*}}
\def\eeann{\end{eqnarray*}}
\def\ben{\begin{enumerate}}
\def\een{\end{enumerate}}
\def\bit{\begin{itemize}}
\def\eit{\end{itemize}}

\newtheorem{definicio}{Definition}[section]
\newtheorem{resultat}{Result}[section]
\newtheorem{assum}{Assumption}[section]
\newtheorem{prop}{Proposition}[section]
\newtheorem{lema}{Lemma}[section]
\newtheorem{theo}{Theorem}[section]
\newcommand{\bd}{\begin{definicio} } 
\newcommand{\ed}{\end{definicio} } 
\newcommand{\bt}{\begin{theo} } 
\newcommand{\et}{\end{theo} } 
\newcommand{\bi}{\begin{itemize} } 
\newcommand{\ei}{\end{itemize} } 
\newcommand{\be}{\begin{enumerate} } 
\newcommand{\ee}{\end{enumerate} } 
\newcommand{\br}{\begin{resultat} } 
\newcommand{\er}{\end{resultat} } 
\newcommand{\ba}{\begin{assum} } 
\newcommand{\ea}{\end{assum} } 
\newcommand{\bl}{\begin{lema}}
\newcommand{\el}{\end{lema}}
\newcommand{\bp}{\begin{prop}}


\def\df{{\mit\Omega}}
\def\Lag{{\cal L}}

\def\d{{\rm d}}

\def\Tan{{\rm T}}
\def\Lie{\mathop{\rm L}\nolimits}
\def\inn{\mathop{i}\nolimits}
\def\Cinfty{{\rm C}^\infty}

\textwidth= 16cm \textheight= 24,5cm
\hoffset=-2cm \voffset=-2,5cm
\def\tabaddress#1{{\small\it\begin{tabular}[t]{c}#1
\\[1.2ex]\end{tabular}}}

\parskip=8pt

\title{\sc Multisymplectic Formalism for Cubic Horndeski Theories}
\author{
{\sc  Mauricio Doniz\thanks{{\bf e}-{\it mail}:
   mauricio.doniz@upc.edu} }, \\
   \tabaddress{  Department of Mathematics, Universitat Politècnica de Catalunya   
   }\\
   {\sc Jordi Gaset\thanks{{\bf e}-{\it mail}:
   jordi.gaset@unir.net }},  \\
   \tabaddress{Escuela Superior de Ingeniería y Tecnología, Universidad Internacional de La Rioja
}}
   \date{\today \\
   }

\begin{document}

\maketitle

\begin{abstract}

We present the covariant multisymplectic formalism for the so-called cubic Horndeski theories and discuss the geometrical and physical interpretation of the constraints that arise in the unified Lagrangian-Hamiltonian approach. We analyse in more detail the covariant Hamiltonian formalism of these theories and we show that there are particular conditions that must be satisfied for the Poincaré-Cartan form of the Lagrangian to project onto $J^1\pi$. From this result, we study when a formulation using only  multimomenta is possible. We further discuss the implications of the general case, in which the projection onto $J^1\pi$ conditions are not met. 

\end{abstract}

 \bigskip
\noindent {\bf Key words}:
 \textsl{Higher-order field theories, Covariant Hamiltonian field theory, Multisymplectic structure, Horndeski theory.}

\noindent\textbf{MSC\,2020 codes:}
{\sl Primary:} 53D42,70S05, 83C05;
 \\
\indent\indent\indent\indent\indent
{\sl Secondary:}
35R01, 53C15, 53C80, 53Z05.

\newpage

\tableofcontents

\newpage

\section{Introduction}
\label{Section1}

The multisymplectic formalism is a generalisation of symplectic geometry for field theories. It provides a covariant framework of the Lagrangian, Hamiltonian and the Hamilton-Jacobi formulations for field theories. The geometrical aspects of the multisymplectic formalism for first order theories, and its manifolds and forms have been studied in detail in \cite{Cantrijn1, Cantrijn2, Echeverria-Enriquez1,Carinena, Forger,gaset_variational_2016}. Throughout this work we go through the features and results of the formalism that we will need. To understand the formalism at a deeper level, \cite{Gimmsy,RR2009,Ryvkin} constitute a good starting point.

It is possible study field theories of up to second-order with this formalism. In particular, \cite{pere1} covers the most relevant features of the second-order multisymplectic formalism. However, third and higher order field theories do not have unique Poincaré-Cartan forms, although some efforts in that direction have been discussed in \cite{Cedric,vitagliano2010}. Besides this, more problems regarding the non-uniqueness of the geometrical structures appear in the definition of the Legendre maps associated with higher-order Lagrangians and also problems arise while trying to impose a multimomentum phase space for the Hamiltonian formalism of such theories.

The best way to overcome the aforementioned problems is to use the unified Lagrangian-Hamiltonian formalism. It was first introduced by Skinner and Rusk in \cite{Skinner1,Skinner2,Skinner3}, and the basic idea is to merge the Lagrangian and Hamiltonian formalisms into one. Even though Skinner and Rusk's idea tames some of the problems, still some arbitrary parameters that appear in the solutions of the higher-order field equations and in the definition of the Legendre maps must be fixed to guarantee its consistency. A modification of this framework that clarifies the choice of the jet and the multimomentum bundles and removes all ambiguity for second-order field theories was developed in \cite{pere2}, and this is the approach that we shall use throughout this work.
 
The unified Lagrangian-Hamiltonian formalism allows us to extract all the relevant physical information of a given system. First, we identify the geometry, manifolds and bundles of the theory and set the Lagrangian-Hamiltonian problem. For regular Lagrangians, the field equations that arise are compatible and have solutions on the jet-multimomentum bundle. This is not the case for singular Lagrangians, where it is needed to implement a constraint algorithm to be able to find the corresponding submanifolds of the jet-multimomentum bundle on which the field equations are compatible and have solutions. The constraint algorithm that we will apply was developed in \cite{deLeon}.

In regular first-order theories, the holonomy condition is recovered from the local coordinate expression of the field equations \cite{Echeverria}. That is not the case for second-order field theories (even for regular Lagrangians) and, therefore, it is required to imposed it \textit{a priori}.  Moreover, in the unified formalism the singularity of the Lagrangian appears also as constraints, in particular, as the definition of the Legendre transform. This is convenient as the implementation of the constraint algorithm is straightforward.

Another advantage of the unified Lagrangian-Hamiltonian formalism is that one can derive a covariant Hamiltonian formulation, as long as some regularity conditions are meet. Another common construction of a Hamiltonian formulation for field theories consists on performing a space $+$ time decomposition of the covariant Lagrangian formalism and then perform an instantaneous Legendre transform. This was originally performed by Arnowitt, Deser and Misner for General Relativity \cite{ADM}. This ADM-like approach, also called the instantaneous Hamiltonian formulation, has been studied from a geometric point of view \cite{gotay_multisymplectic_1991,Margalef}. We shall delve into the relation and equivalence between these Hamiltonian formalisms for theories of gravity in a future work. 

For all these reasons the multisymplectic formalism, and particularly the unified Lagrangian-Hamiltonian formalism, is suitable for studying singular second-order field theories such as certain string theories, the Korteweg-de Vries model and some of the most relevant theories of gravity, including General Relativity. 

It is known that General Relativity (GR) is one of the most successful theories in the history of physics. For over a hundred years now, it has been tested and shown to be the standard model of gravity. However, it is also known that it is is a low-energy effective theory, incomplete in the sense that it is  non-renormalisable \cite{Klauder}. There are different motivations for studying modified models of gravity. From the phenomenological perspective, the relatively recent detection of the first gravitational waves opens a new way of testing generalised models of gravity that predict something different from what GR predict \cite{Yunes}. The problem of the fine-tuned cosmological constant needed to explain the accelerated expansion of the universe in General Relativity is a strong incentive for physicists to explore generalised models of gravity as well. If we turn to a theoretical point of view, studying modified gravity grants a deeper understanding of GR.
Finally, mathematically speaking, understanding the geometric structure of generalised models of gravity could impose strong constraints on the theory that could be used by physicists to discard models or, on the other hand, make them turn their attention to a certain model. 

That being said, Horndeski's theory is an interesting candidate for being the generalisation of GR since it is the most general diffeomorphism invariant, scalar-tensor theory that leads to second order equations of motion \cite{Kobayashi}. It is strongly hyperbolic, at least at weak coupling, and therefore admits a well-posed initial value problem \cite{Kovacs1}. Most importantly, this type of theories are causal and hence allow for the existence of dynamical black holes \cite{Reall1} which could be potentially observed with the current techniques. 

We will be focusing on the construction of the multisymplectic formalism for the cubic subclass of Horndeski's theory. This subclass leads to strongly hyperbolic equations \cite{Kovacs2} and recently has gained attention among cosmologists due to the fact that this model can describe a non-singular bouncing universe \cite{Ijjas}. Besides the importance of this subclass of theories as a physical model, its covariant Hamiltonian formulation arises relevant geometric and physical consequences, as we shall discuss in the last section. 

The multisymplectic formalism of relevant theories of gravity can be found in the literature. Some of the most relevant examples are General Relativity \cite{GasetGR}, metric-affine gravity with \cite{Vey, capriotti_unified_2018} and without vielbein \cite{gaset_new_2019}, Lovelock Gravity \cite{Capriotti} and even Chern-Simons gravity and the bosonic string \cite{Arnoldo}.

The aim of the present work is to present the multisymplectic framework for the cubic subclass of Horndeski theories. A major feature of this formalism is that it provides a recipe for obtaining the Hamiltonian formulation of generalised theories of Gravity. Hence, we will first establish the geometric framework of the theory and introduce a suited change of co-ordinates that simplifies the calculations of the constraint algorithm. Finally we will show how to obtain the Hamiltonian formulation of the theory and briefly discuss its implications. 

All the manifolds are real, second countable and of class $C^{\infty}$. Manifolds and mappings are
assumed to be smooth. Sum over crossed repeated indices is understood. Comas denote partial derivatives and semicolons covariant derivatives.

\section{Setting up the problem}
\label{Section2}

In this section, we will introduce the geometrical structures and the manifolds and bundles that we need to construct the formalism for cubic Horndeski's theories. To do so, we will present the Horndeski's Lagrangian, explain its main features and justify the importance of the cubic subclass of Horndeski's theories. Finally we will set up the Lagrangian-Hamiltonian problem for this theories.


\subsection{Geometry, manifolds and bundles of the theory}
\label{Subsection2.2}

Let $M$ be an oriented 4-dimensional spacetime with coordinates $x^\mu$, $\mu=0,1,2,3$ and whose volume form is denoted by $\eta \in \df^4(M)$. A scalar field is a map $\phi: M\rightarrow \mathbb{R}$ or, equivalently, its graph is a section of the product bundle $M\times \mathbb{R}$ over $M$.

The covariant configuration bundle\footnote{Covariant configuration bundle will be just referred to as configuration bundle from now on. In contrast, there is an instantaneous configuration bundle that appears in the ADM-like formulation of the theory.} for this system is a fiber bundle
$\pi\colon E\rightarrow M$, with
$E$ being the manifold $\left(M\times\mathbb{R}\right)\times_MS_2^{3,1}(M)$ , where $S_2^{3,1}(M)$ denotes the bundle of symmetric covariant two-tensors $g$ of Lorentz signature $(-,+,+,+)$ acting on $T_xM$.

The adapted fiber coordinates in $E$ are $(x^\mu,g_{\alpha\beta},\phi)$,
($\mu,\alpha,\beta=0,1,2,3$), where $g_{\alpha\beta}$ and $\phi$ are the component functions of the metric and the scalar field respectively. The volume form satisfies $\eta=\d x^0\wedge\d x^1\wedge\d x^2\wedge\d x^3\equiv\d^4x$.
Provided that the metric is symmetric, it only has ten independent components and hence $\dim E=15$. We shall consider this when we sum over indices on the fiber, thus it is useful to establish a convention and order the indices as $0\leq\alpha\leq\beta\leq3$.\\

The 
\textsl{$k$th-order jet bundles} of the projection $\pi$, $J^k\pi$, ($k=1,2,3$);
which are the manifolds of the $k$-jets of local sections 
$\Psi \in \Gamma(\pi)$ are equivalence classes of local sections of $\pi$
\cite{book:Saunders89}.
Points in $J^k\pi$ are denoted by $j^k_x\psi$, 
with $x \in M$ and $\psi \in \Gamma(\pi)$ being a representative
of the equivalence class.
If $\Psi \in \Gamma(\pi)$, we denote the \textsl{$k$th prolongation} of $\Psi$ to $J^k\pi$ by
$j^k\Psi \in \Gamma(\bar{\pi}^k)$.
We have the following natural projections: if $r \leqslant k$,
$$
\begin{array}{rcl}
\pi^k_r \colon J^k\pi & \longrightarrow & J^r\pi \\
j^k_x\Psi & \longmapsto & j^r_x\Psi
\end{array}
\quad ; \quad
\begin{array}{rcl}
\pi^k \colon J^k\pi & \longrightarrow & E \\
j^k_x\Psi & \longmapsto & \Psi(x)
\end{array}
\quad ; \quad
\begin{array}{rcl}
\bar{\pi}^k \colon J^k\pi & \longrightarrow & M \\
j^k_x\Psi & \longmapsto & x
\end{array} \ .
$$
Observe that $\pi^s_r\circ\pi^k_s=\pi^k_r$, $\pi^k_0 = \pi^k$, 
$\pi^k_k=\textnormal{Id}_{J^k\pi}$, and $\bar{\pi}^k = \pi \circ \pi^k$.
The induced coordinates in $J^3\pi$ are 
$(x^\mu,\,g_{\alpha\beta},\phi,\,g_{\alpha\beta,\mu},\phi_{,\mu},\,g_{\alpha\beta,\mu\nu},\phi_{,\mu\nu},
\,g_{\alpha\beta,\mu\nu\lambda},\phi_{,\mu\nu\lambda})$, ($0\leq\mu\leq\nu\leq\lambda\leq3$). We shall use all the possible permutations, although only the ordered ones are actual co-ordinates.\\

Now we can explicitly write the total derivative $D_{\tau}$ in these local co-ordinates as

\begin{align*}
D_\tau=\derpar{}{x^\tau}
+\sum_{\substack{\alpha\leq\beta\\\mu\leq\nu\leq\lambda}} 
\Big(g_{\alpha\beta,\tau}\derpar{}{g_{\alpha\beta}}&+ g_{\alpha\beta,\mu\tau}\derpar{}{g_{\alpha\beta,\mu}}+
 g_{\alpha\beta,\mu\nu\tau}\derpar{}{g_{\alpha\beta,\mu\nu}}
+ g_{\alpha\beta,\mu\nu\lambda\tau}\derpar{}{g_{\alpha\beta,\mu\nu\lambda}}\nonumber\\
+\phi_{;\tau}\derpar{}{\phi}&+\phi_{,\mu\tau}\derpar{}{\phi_{,\mu}}+\phi_{,\mu\nu\tau}\derpar{}{\phi_{,\mu\nu}}+\phi_{,\mu\nu\lambda\tau}\derpar{}{\phi_{,\mu\nu\lambda}}\Big).
\end{align*}

Notice that, if $f\in\Cinfty(J^k\pi)$, then $D_\tau f\in\Cinfty(J^{k+1}\pi)$.

Next, consider the bundle $J^1\pi$ and let ${\cal M}\pi\equiv\Lambda_2^4(J^{1}\pi)$ 
be the bundle of $4$-forms over $J^{1}\pi$ vanishing 
by the action of two $\bar{\pi}^{1}$-vertical vector fields; 
with the canonical projections
$$
\pi_{J^{1}\pi} \colon \Lambda_2^4(J^{1}\pi) \to J^{1}\pi \quad ; \quad
\bar{\pi}_M= \bar{\pi}^{1} \circ \pi_{J^{1}\pi} \colon \Lambda_2^4(J^{1}\pi) \to M \, .
$$

The induced local coordinates in $\Lambda_2^4(J^{1}\pi)$ are
$(x^\mu,g_{\alpha\beta},\phi,g_{\alpha\beta,\mu},\phi_{,\mu},p_g,p_{\phi},p_g^{\alpha\beta,\mu},p_{\phi}^{,\mu},\overline{p}_g^{\alpha\beta,\mu\nu},\overline{p}_{\phi}^{,\mu\nu})$, with $0\leq\alpha\leq\beta\leq3$ and $\mu,\nu=0,1,2,3$.

This bundle is endowed with the
\textsl{tautological (or Liouville) $4$-form} 
$\Theta_{1}\in\df^4(\Lambda_2^4(J^{1}\pi))$
and the \textsl{canonical (or Liouville) $5$-form} 
$\Omega_{1} = -d\Theta_{1} \in \Omega^5(\Lambda_2^4(J^{1}\pi))$, which is a multisymplectic form; meaning it is closed and $1$-nondegenerate,
with the following local expressions
\beann
\Theta_1&=& 
p\d^4x+\sum_{\alpha\leq\beta}\left(p_g^{\alpha\beta,\mu}\d g_{\alpha\beta}\wedge \d^{3}x_\mu+
\overline{p}_g^{\alpha\beta,\mu\nu}\d g_{\alpha\beta,\mu}\wedge \d^{3}x_{\nu}\right)+\overline{p}_{\phi}d\phi+\overline{p}_{\phi}^{,\mu}d\phi_{,\mu}
 \ , \\
\Omega_1&=&
-\d p\wedge \d^4x-\sum_{\alpha\leq\beta}\left(\d p_g^{\alpha\beta,\mu}\wedge \d g_{\alpha\beta}\wedge \d^{3}x_\mu+
\d \overline{p}_g^{\alpha\beta,\mu\nu}\wedge \d g_{\alpha\beta,\mu}\wedge \d^{3}x_{\nu}\right) - \d \overline{p}_{\phi}\wedge d\phi - \d \overline{p}_{\phi}^{,\mu}\wedge d\phi_{,\mu}\ ;
\eeann
where $\displaystyle \d^3x_\nu=\inn\left(\derpar{}{x^\nu}\right)\d^4x$.

Consider the $\pi_{J^{1}\pi}$-transverse submanifold
$\jmath_s \colon J^2\pi^\dagger \hookrightarrow \Lambda_2^4(J^1\pi)$ 
defined locally by the constraints 
$\overline{p}_g^{\alpha\beta,\mu\nu}=\overline{p}_g^{\alpha\beta,\nu\mu}$ and $\overline{p}_{\phi}^{,\mu\nu}=\overline{p}_{\phi}^{,\nu\mu}$.
This submanifold is called the \textsl{extended $2$-symmetric multimomentum bundle} and its construction is canonical even though it is defined in local coordinates \cite{art:Saunders_Crampin90}. 

Let 
$$
\pi_{J^1\pi}^\dagger \colon J^2\pi^\dagger \to J^1\pi
\quad ; \quad
\bar{\pi}_M^\dagger = \bar{\pi}^{1} \circ \pi_{J^1\pi}^\dagger \colon J^2\pi^\dagger \to M
$$
be the canonical projections. 

The coordinates in $J^2\pi^\dagger$ are 

\begin{equation*}
   (x^\mu,g_{\alpha\beta},\phi,g_{\alpha\beta,\mu},\phi_{,\mu},p,p_g^{\alpha\beta,\mu},p_{\phi}^{,\mu},p_g^{\alpha\beta,\mu\nu},p_{\phi}^{,\mu\nu}), 
\end{equation*}

with ($0\leq\alpha\leq\beta\leq 3$; $0\leq\mu\leq\nu\leq3$), and 
$\displaystyle\jmath_s^*\overline{p}_g^{\alpha\beta,\mu\nu}=\frac{1}{n(\mu\nu)}p_g^{\alpha\beta,\mu\nu}$, $\displaystyle\jmath_s^*\overline{p}_{\phi}^{,\mu\nu}=\frac{1}{n(\mu\nu)}p_{\phi}^{,\mu\nu}$, where $n(\mu\nu)$ is a combinatorial factor defined by
$n(\mu\nu)=1$ for $\mu=\nu$, and $n(\mu\nu)=2$ for $\mu\neq\nu$.

Denote $\Theta_1^s =\jmath_s^*\Theta_1 \in \Omega^4(J^2\pi^\dagger)$
and the multisymplectic form 
$\Omega_1^s= \jmath_s^*\Omega_1 = -d\Theta_1^s \in \Omega^5(J^2\pi^\dagger)$, 
which are called \textsl{symmetrised Liouville $m$ and $(m+1)$-forms}. In this case, the local expressions are 
\begin{align*}
\Theta_1^s=& p\,\d^4x+\sum_{\alpha\leq\beta}p_g^{\alpha\beta,\mu}\d g_{\alpha\beta}\wedge \d^{3}x_\mu+\sum_{\alpha\leq\beta}\frac{1}{n(\mu\nu)}p_g^{\alpha\beta,\mu\nu}\d g_{\alpha\beta,\mu}\wedge \d^{3}x_{\nu} \nonumber\\
&+ p_{\phi}^{,\mu}\d \phi \wedge \d^3 x_{\mu}+\frac{1}{n(\mu\nu)}p_{\phi}^{,\mu\nu}\d \phi_{\mu} \wedge \d^3 x_{\nu}\ , \\
\Omega_1^s=&-\d p\wedge \d^4x-\sum_{\alpha\leq\beta}\d p_g^{\alpha\beta,\mu}\wedge \d g_{\alpha\beta}\wedge \d^{3}x_\mu-\sum_{\alpha\leq\beta}\frac{1}{n(\mu\nu)}\d p_g^{\alpha\beta,\mu\nu}\wedge \d g_{\alpha\beta,\mu}\wedge \d^{3}x_{\nu}\nonumber\\
&-\d p_{\phi}^{,\mu}\wedge\d \phi \wedge \d^3 x_{\mu}-\frac{1}{n(\mu\nu)}\d p_{\phi}^{,\mu\nu}\wedge \d \phi_{\mu} \wedge \d^3 x_{\nu}\ .
\end{align*}

Last, consider the quotient bundle $J^2\pi^\ddagger = J^2\pi^\dagger / \Lambda^4_1(J^1\pi)$,
called the \textsl{restricted $2$-symmetric multimomentum bundle}, 
endowed with the following projections
$$
\mu \colon J^2\pi^\dagger \to J^2\pi^\ddagger\quad ; \quad
\pi_{J^1\pi}^\ddagger \colon J^2\pi^\ddagger \to J^1\pi\quad ; \quad
\bar{\pi}_M^\ddagger \colon J^2\pi^\ddagger \to M .
$$
Observe that $J^2\pi^\ddagger$ is also the submanifold of
$\Lambda^4_2(J^1\pi) / \Lambda^4_1(J^1\pi)$ 
defined by the local constraints $\overline{p}^{\alpha\beta,\mu\nu} 
= \overline{p}^{\alpha\beta,\nu\mu} $ and  $\overline{p}_{\phi}^{;\mu\nu} 
= \overline{p}_{\phi}^{;\nu\mu} $
Thus, the coordinates in $J^2\pi^\ddagger$ are 

\begin{equation*}
(x^\mu,g_{\alpha\beta},\phi,g_{\alpha\beta,\mu},\phi_{,\mu},p_g^{\alpha\beta,\mu},p_{\phi}^{,\mu},p_g^{\alpha\beta,\mu\nu},p_{\phi}^{,\mu\nu}),
\end{equation*}

with ($0\leq\alpha\leq\beta\leq 3$; $0\leq\mu\leq\nu\leq3$).
The dimension of this submanifold is $\dim J^{2}\pi^\ddagger = \dim J^{2}\pi^\dagger - 1$.

The {\sl Horndeski Lagrangian density} is a 
$\overline{\pi}^2$-semibasic form $\mathcal{L}_{\mathfrak V}\in\Omega^4(J^2\pi)$. Hence
$\mathcal{L}_{\mathfrak V}=L_{\mathfrak V}\,(\overline{\pi}^2)^*\eta$, 
where $L_{\mathfrak V}\in\Cinfty(J^2\pi)$ is the {\sl Horndeski Lagrangian function} 

\begin{equation*}
L_{\mathfrak V}=\frac{1}{16\pi G}\sqrt{|g|}\left(\sum_{i=1}^3L_i\right), 
\end{equation*}

where, in the coordinates of the manifold:
\begin{align*}
   L_1=&R+X\,;\quad L_2=G_2(\phi,X)\,;\quad L_3=G_3(\phi,X)\square \phi\,;
   \\
   L_4=&G_4(\phi,X)R+G_{4,X}(\phi,X)\left[(\square \phi)^2-g^{\mu\alpha}g^{\nu\beta}\phi_{;\mu\nu}\phi_{;\alpha\beta}\right]
\\
L_5=&G_5(\phi,X)G_{\mu\nu}g^{\mu\alpha}g^{\nu\beta}\phi_{;\alpha\beta}\nonumber\\
&-\frac{1}{6}G_{5,X}(\phi,X)\left[(\square \phi)^3+2g^{\nu\alpha}g^{\beta\gamma}g^{\sigma\nu}\phi_{;\mu\nu}\phi_{;\alpha\beta}\phi_{;\gamma\sigma}-3g^{\alpha\mu}g^{\beta\nu}g^{}\phi_{;\mu\nu}\phi_{;\alpha\beta}\square \phi\right].
\end{align*}

Here $g^{\mu\nu}$ are the components of the inverse of the metric tensor, that is, $g_{\alpha\mu}g^{\mu\nu}=\delta_\alpha^\nu$;
$g$ the determinant of the metric tensor; $X=-\frac{1}{2}g^{\mu\nu}\phi_{;\mu}\phi_{;\nu}$, and $\square \phi=g^{\mu\nu}\phi_{;\mu\nu}$. The Ricci tensor is given by: $R_{\alpha\beta}=D_\gamma\Gamma^{\gamma}_{\alpha\beta}-D_\alpha\Gamma^{\gamma}_{\gamma\beta}+
\Gamma^{\gamma}_{\alpha\beta}\Gamma^{\delta}_{\delta\gamma}-
\Gamma^{\gamma}_{\delta\beta}\Gamma^{\delta}_{\alpha\gamma}$. Hence, the Ricci scalar has the local expression: $$R=g^{\alpha\beta}R_{\alpha\beta}=g^{\alpha\beta}\left(D_\gamma\Gamma^{\gamma}_{\alpha\beta}-D_\alpha\Gamma^{\gamma}_{\gamma\beta}+
\Gamma^{\gamma}_{\alpha\beta}\Gamma^{\delta}_{\delta\gamma}-
\Gamma^{\gamma}_{\delta\beta}\Gamma^{\delta}_{\alpha\gamma}\right),$$

As we mentioned in the introduction, throughout this work we shall consider the cubic subclass of Horndeski's theories, which means we will only preserve up to the $L_3$ term in the Lagrangian, i.e. we will set $G_4(\phi,X)=0$ and $G_5(\phi,X)=0$.



\subsection{The higher-order jet multimomentum bundles}
\label{Subsection2.3}

The unified Lagrangian-Hamiltornian formalism is set in a bundle that encompasses the jets and bundles described in the previous section and hence the manifolds $M$ and $E$. First, we construct the \textsl{symmetric higher-order jet multimomentum bundle} $\mathcal{W}$ and the \textsl{restricted symmetric higher-order jet multimomentum bundle} $\mathcal{W}_r$ as described in \cite{pere1, pere2}

\begin{equation*}
    \mathcal{W}=J^3\pi\times_{J^1\pi}J^2\pi^\dagger,
\end{equation*}

\begin{equation*}
    \W_r = J^{3}\pi \times_{J^{1}\pi} \, J^{2}\pi^\ddagger.
\end{equation*}

Here $J^2\pi^\dagger$ $J^2\pi^\ddagger$ are the extended and the restricted 2-symmetric multimomentum bundle respectively, as discussed in the previous section. The symmetric higher-order jet multimomentum bundles have the following natural local coordinates 
\begin{equation*}
    (x^\mu,\,g_{\alpha\beta},\phi,\,g_{\alpha\beta,\mu},\phi_{,\mu},\,g_{\alpha\beta,\mu\nu},\phi_{,\mu\nu},
\,g_{\alpha\beta,\mu\nu\lambda},\phi_{,\mu\nu\lambda},p,p_g^{\alpha\beta,\mu},p_{\phi}^{,\mu},p_g^{\alpha\beta,\mu\nu},p_{\phi}^{,\mu\nu}),
\end{equation*}
and
\begin{equation*}
    (x^\mu,\,g_{\alpha\beta},\phi,\,g_{\alpha\beta,\mu},\phi_{,\mu},\,g_{\alpha\beta,\mu\nu},\phi_{,\mu\nu},
\,g_{\alpha\beta,\mu\nu\lambda},\phi_{,\mu\nu\lambda},p_g^{\alpha\beta,\mu},p_{\phi}^{,\mu},p_g^{\alpha\beta,\mu\nu},p_{\phi}^{,\mu\nu}),
\end{equation*}
with ($0\leq\alpha\leq\beta\leq 3$; $0\leq\mu\leq\nu\leq3$), and are endowed with the following projections
\beann
\rho_1 \colon \W \to J^{3}\pi \ ,\
&\rho_2 \colon \W \to J^{2}\pi^\dagger \ ,\ &
\rho_M \colon \W \to M
\\
\rho^r_1 \colon \W_r \to J^{3}\pi \ ,\
&\rho^r_2 \colon \W_r \to J^{2}\pi^\ddagger  \ ,\ &
\rho_M^r \colon \W_r \to M \ .
\eeann

Moreover, the quotient map
$\mu \colon J^{2}\pi^\dagger \to J^{2}\pi^\ddagger$
induces a natural submersion $\mu_\W \colon \W \to \W_r$.

Now, we define the canonical pairing which will help us determine the Hamiltonian function. 
$$
\begin{array}{rcl}
\mathcal{C} \colon J^{2}\pi \times_{J^{1}\pi} \Lambda_2^4(J^{1}\pi) & \longrightarrow & \Lambda_1^4(J^{1}\pi) \\
(j^{2}_x\phi,\omega) & \longmapsto & (j^{1}\phi)^*_{j^{1}_x\phi}\omega
\end{array} \ ,
$$
hence we have can define a new pairing $\mathcal{C}^s \colon J^2\pi \times_{J^1\pi} J^2\pi^\dagger \to \Lambda_1^4(J^1\pi)$ as
$$
\mathcal{C}^s(j^{2}_x\phi,\omega) = \mathcal{C}(j^{2}_x\phi,j_s(\omega)) = (j^1\phi)_{j^1_x\phi}^* \ j_s(\omega) \, .
$$

From here we get the \textsl{second-order coupling $4$-form} in $\mathcal{W}$, which is the $\rho_M$-semibasic $4$-form
$\hat{\mathcal{C}} \in \Omega^4(\mathcal{W})$  defined by
$$
\hat{\mathcal{C}}(j^3_x\phi,\omega) = \mathcal{C}^s(\pi^3_2(j^3_x\phi),\omega) 
\quad ,\quad  (j^3_x\phi,\omega) \in \mathcal{W} \ .
$$
As $\hat{\mathcal{C}}$ is a $\rho_M$-semibasic $4$-form, 
there exists a function $\hat{C} \in C^\infty(\mathcal{W})$ such that 
$\hat{\mathcal{C}} = \hat{C}\rho_M^*\eta$. In co-ordinates this is written as
$$
\hat{\mathcal{C}}= 
\left(p+\sum_{\alpha\leq\beta}p_g^{\alpha\beta,\mu}g_{\alpha\beta,\mu}
+\sum_{\substack{\alpha\leq\beta\\\mu\leq\nu}}p_g^{\alpha\beta,\mu\nu}g_{\alpha\beta,\mu\nu}+p_{\phi}^{,\mu}\phi_{,\mu}+\sum_{\mu\leq\nu}p_{\phi}^{,\mu\nu}\phi_{,\mu\nu}\right)\d^4x \ .
$$

A $4-$form $\hat{\mathcal{L}}=(\pi^3_2 \circ \rho_1)^*\mathcal{L}_{\mathfrak V}\in \Omega^4(\mathcal{W})$,
which can be written as $\hat{\mathcal{L}} = \hat{L} \,\rho_M^*\eta$,
where $\hat{L} = (\pi^3_2 \circ \rho_1)^*L_{\mathfrak V}\in C^\infty(\mathcal{W})$, can be used to define the {\sl Hamiltonian submanifold} $\mathcal{W}_o$
$$
\mathcal{W}_o = \left\{ w \in \mathcal{W} \colon \hat{\mathcal{L}}(w) = \hat{\mathcal{C}}(w) \right\} \stackrel{\jmath_o}{\hookrightarrow} \mathcal{W} \ ,
$$
which is ultimately defined by the following constraint
$$
\hat{\mathcal{C}}-\hat{L}\equiv p+\sum_{\alpha\leq\beta}p_g^{\alpha\beta,\mu}g_{\alpha\beta,\mu}
+\sum_{\substack{\alpha\leq\beta\\\mu\leq\nu}}p_g^{\alpha\beta,\mu\nu}g_{\alpha\beta,\mu\nu}+p_{\phi}^{,\mu}\phi_{,\mu}+\sum_{\mu\leq\nu}p_{\phi}^{,\mu\nu}\phi_{,\mu\nu}-\hat{L} = 0 \ .
$$

This submanifold is $\mu_\mathcal{W}$-transverse and diffeomorphic to $\mathcal{W}_r$, $\Phi_o\colon\W_o\to\W_r$.
$\mathcal{W}_o$ induces a \textsl{Hamiltonian section}
$\hat{h} \in \Gamma(\mu_\mathcal{W})$ by
$\hat{h} = \jmath_o \circ \Phi_o^{-1} \colon \mathcal{W}_r \to \mathcal{W}$, 
specified by the local \textsl{Hamiltonian function}
$$
\hat H=\sum_{\alpha\leq\beta}p_g^{\alpha\beta,\mu}g_{\alpha\beta,\mu}+\sum_{\substack{\alpha\leq\beta\\\mu\leq\nu}}p_g^{\alpha\beta,\mu\nu}g_{\alpha\beta,\mu\nu}+p_{\phi}^{,\mu}\phi_{,\mu}+\sum_{\mu\leq\nu}p_{\phi}^{,\mu\nu}\phi_{,\mu\nu}-\hat L \ .
$$
that is, 
\beann
\hat{h}
(x^\mu,\,g_{\alpha\beta},\phi,\,g_{\alpha\beta,\mu},\phi_{,\mu},\,g_{\alpha\beta,\mu\nu},\phi_{,\mu\nu},
\,g_{\alpha\beta,\mu\nu\lambda},\phi_{,\mu\nu\lambda},p_g^{\alpha\beta,\mu},p_{\phi}^{,\mu},p_g^{\alpha\beta,\mu\nu},p_{\phi}^{,\mu\nu}) 
\\
=(x^\mu,\,g_{\alpha\beta},\phi,\,g_{\alpha\beta,\mu},\phi_{,\mu},\,g_{\alpha\beta,\mu\nu},\phi_{,\mu\nu},
\,g_{\alpha\beta,\mu\nu\lambda},\phi_{,\mu\nu\lambda},-\hat{H},p_g^{\alpha\beta,\mu},p_{\phi}^{,\mu},p_g^{\alpha\beta,\mu\nu},p_{\phi}^{,\mu\nu}).
\eeann

This is all summarised in the following commutative diagram

$$
\xymatrix{
\ & \ & \mathcal{W} \ar@/_1.3pc/[llddd]_{\rho_1} \ar[d]_-{\mu_\mathcal{W}} \ar@/^1.3pc/[rrdd]^{\rho_2} & \ & \ \\
\ & \ & \mathcal{W}_r \ar@/_1pc/[u]_{\hat{h}} \ar[lldd]_{\rho_1^r} \ar[rrdd]_{\rho_2^r} \ar[ddd]^<(0.4){\rho_{J^{1}\pi}^r} \ar@/_2.5pc/[dddd]_-{\rho_M^r}|(.675){\hole} & \ & \ \\
\ & \ & \ & \ & J^{2}\pi^\dagger \ar[d]^-{\mu} \ar[lldd]_{\pi_{J^{1}\pi}^\dagger}|(.25){\hole} \\
J^{3}\pi \ar[rrd]_{\pi^{3}_{1}} \ar@/_1.3pc/[ddrr]_-{\bar{\pi}^{3}} \ar@/_2.5pc/[rrddd]_{\pi^{3}}& \ & \ & \ & J^{2}\pi^\ddagger \ar[dll]^{\pi_{J^{1}\pi}^\ddagger} \\
\ & \ & J^{1}\pi \ar[d]^{\bar{\pi}^{1}} \ar@/_2.5pc/[dd]_-{\pi^1}|(.165){\hole}|(.375){\hole}& \ & \ \\
\ & \ & M & \ & \ \\
\ & \ & E \ar[u]_-{\pi}& \ & \
}
$$

The {\sl Liouville forms } in $\W_r$,
$\Theta_r = (\rho_2 \circ \hat{h})^*\Theta_1^s \in \df^4(\W_r)$ and
$\Omega_r=-\d\Theta_r=(\rho_2 \circ \hat{h})^*\Omega_1^s\in\df^5(\W_r)$,
for second order field theories \cite{pere2}, in these specific co-ordinates, are

\begin{align*}
\Theta_r=&
-\hat{H}\d^4x+\sum_{\alpha\leq\beta}p^{\alpha\beta,\mu}\d g_{\alpha\beta}\wedge \d^{3}x_\mu
+\sum_{\alpha\leq\beta}\frac{1}{n(\mu\nu)}p^{\alpha\beta,\mu\nu}\d g_{\alpha\beta,\mu}\wedge \d^{3}x_{\nu}\nonumber\\
&+p^{,\mu\nu}\d \phi \wedge \d^{3}x_{\mu}+\frac{1}{n(\mu\nu)} p^{,\mu\nu} \d \phi_{,\mu} \wedge \d^3 x_{\nu}
\end{align*}

\begin{align}\label{eqn:premultisymplecticform}
\Omega_r=&
\d \hat{H}\wedge \d^4x-\sum_{\alpha\leq\beta}\d p_g^{\alpha\beta,\mu}\wedge \d g_{\alpha\beta}\wedge \d^{3}x_\mu-\sum_{\alpha\leq\beta}\frac{1}{n(\mu\nu)}\d p_g^{\alpha\beta,\mu\nu}\wedge \d g_{\alpha\beta,\mu}\wedge \d^{3}x_{\nu}\nonumber\\
&-\d p_g^{,\mu\nu}\wedge\d \phi \wedge \d^{3}x_{\mu}-\frac{1}{n(\mu\nu)} \d p_g^{,\mu\nu}\wedge \d \phi_{,\mu} \wedge \d^3 x_{\nu};
\end{align}

To obtain the form (\ref{eqn:premultisymplecticform}) explicitly, we need to calculate the exterior derivative of the Hamiltonian function. The exterior derivative of a function is the differential of the function. Specifically we have 

\begin{align*}
    \d\hat H=&\sum_{\alpha\leq\beta}\left(g_{\alpha\beta,\mu}\d p_g^{\alpha\beta,\mu}+p_g^{\alpha\beta,\mu}\d g_{\alpha\beta,\mu}\right)+\sum_{\substack{\alpha\leq\beta\\\mu\leq\nu}}\left(g_{\alpha\beta,\mu\nu}\d p_g^{\alpha\beta,\mu\nu}+p_g^{\alpha\beta,\mu\nu}\d g_{\alpha\beta,\mu\nu}\right)\\
    &+\phi_{,\mu}\d p_{\phi}^{,\mu}+p_{\phi}^{,\mu}\d \phi_{,\mu}+\sum_{\mu\leq\nu}\left(\phi_{,\mu\nu}\d p_{\phi}^{,\mu\nu}+p_{\phi}^{,\mu\nu}\d \phi_{,\mu\nu}\right)-\d \hat L.
\end{align*}

The differential of the Lagrangian, provided its dependency upon the metric, the first and second order derivatives of the metric, the scalar field and the first and second order derivatives of the scalar field, is

\begin{align*}
    \d \hat{L}=&\sum_{\alpha\leq\beta}\derpar{\hat L}{g_{\alpha\beta}}\d g_{\alpha\beta}+\sum_{\alpha\leq\beta}\derpar{\hat L}{g_{\alpha\beta,\mu}}\d g_{\alpha\beta,\mu}+\sum_{\substack{\alpha\leq\beta\\\mu\leq\nu}}\derpar{\hat L}{g_{\alpha\beta,\mu\nu}}\d g_{\alpha\beta,\mu\nu}\\
    &+\derpar{\hat L}{\phi}\d\phi+\derpar{\hat L}{\phi_{,\mu}}\d\phi_{,\mu}+\derpar{\hat L}{\phi_{,\mu\nu}}\d\phi_{,\mu\nu}.
\end{align*}

The Liouville forms are degenerate; this is
\beq
\ker\,\Theta_r=\ker\,\Omega_r\supset
{\left<\frac{\partial}{\partial g_{\alpha\beta,\mu\nu}},\frac{\partial}{\partial g_{\alpha\beta,\mu\nu\lambda}},\frac{\partial}{\partial \phi_{,\mu\nu\lambda}}\right>}_{0\leq\alpha\leq\beta\leq3;\, 0\leq\mu\leq\nu\leq\lambda\leq3}\ .
\label{gaugevf}
\eeq
For a premultisymplectic form $\Omega$,
we call {\sl (geometric) gauge vector fields} to those vector fields belonging to $\ker\,\Omega$ (see \cite{gaset_variational_2016,gaset_geometric_2022} for more details).
Furthermore, $\Theta_r$ is $(\pi^3_1\circ\rho^r_2)$-projectable.






\subsection{The Lagrangian-Hamiltonian problem}
\label{laghamunif}

Consider the system $(\W_r,\Omega_r)$.

\begin{definition}
A section $\psi \in \Gamma(\bar{\pi}^{k})$ is {\rm holonomic} if
$j^k(\pi^{k} \circ \psi) = \psi$; that is, $\psi$ is the $k$th prolongation of a section
$\phi = \pi^{k} \circ \psi \in \Gamma(\pi)$,
and an integrable and $\bar{\pi}_M$-transverse multivector field 
${\bf X} \in \mathfrak{X}^4(J^k\pi)$ is
{\rm holonomic} if its integral sections are holonomic.

A section $\psi \in \Gamma(\bar{\pi}_M^\ddagger)$ is {\rm holonomic in 
$J^2\pi^\ddagger$} if
$\bar{\pi}_{J^1\pi}^\ddagger\circ\psi\in\Gamma(\bar{\pi}^{1})$ is holonomic in $J^1\pi$,
and an integrable and $\bar{\pi}_M^\ddagger$-transverse multivector field 
${\bf X} \in \mathfrak{X}^4(J^2\pi^\ddagger)$ is
{\rm holonomic} if its integral sections are holonomic.

Finally, a section $\psi \in \Gamma(\rho_M^r)$ is {\rm holonomic in $\W_r$} if
$\rho_1^r \circ \psi \in \Gamma(\bar{\pi}^{3})$ is holonomic in $J^{3}\pi$,
and an integrable and $\rho_M^r$-transverse multivector field ${\bf X} \in \mathfrak{X}^4(\mathcal{W}_r)$ is
{\rm holonomic} if its integral sections are holonomic.
\end{definition}

It is important to point out that the fact that a multivector field in $\W_r$ 
has the local expression \eqref{locholmv}
(and then being locally decomposable and $\rho^r_M$-transverse)
is just a necessary condition to be holonomic, since it may not be integrable.
However, if such a multivector field admits integral sections, 
then its integral sections are holonomic.
In general, a locally decomposable and $\rho^r_M$-transverse multivector field
which has \eqref{locholmv} as coordinate expression, is said to be {\sl semiholonomic} in $\W_r$.

The \textsl{Lagrangian-Hamiltonian problem} associated with the system $(\W_r,\Omega_r)$
consists in finding holonomic sections $\psi \in \Gamma(\rho_M^r)$ satisfying
any of the following equivalent conditions:
\begin{enumerate}
\item $\psi$ is a solution to the equation
\begin{equation}\label{eqn:UnifFieldEqSect}
\psi^*\inn(X)\Omega_r = 0 \, , \quad \mbox{for every } X \in \mathfrak{X}(\mathcal{W}_r) \ .
\end{equation}
\item $\psi$ is an integral section of a multivector field contained in a class of holonomic multivector fields $\{ {\bf X} \} \subset \mathfrak{X}^4(\mathcal{W}_r)$
satisfying the equation
\begin{equation}\label{eqn:UnifDynEqMultiVF}
\inn({\bf X})\Omega_r = 0 \ .
\end{equation}
\end{enumerate}

As the form $\Omega_r$ is $1$-degenerate we have that
$(\mathcal{W}_r,\Omega_r)$ is a premultisymplectic system,
and solutions to (\ref{eqn:UnifFieldEqSect}) or (\ref{eqn:UnifDynEqMultiVF})
do not exist everywhere in $\mathcal{W}_r$. 

\section{A suitable change of coordinates}
\label{Section3}

At this point, for the sake of facilitating calculations, it is useful to introduce a new set of co-ordinates on $\W_r$. The argument is simple; most interesting Lagrangians contain covariant derivative terms, making it necessary to transform them to terms with partial derivatives and components of the Levi-Civita connection. However, if we choose the velocity, acceleration and jerk co-ordinates of the scalar field on $\W_r$ to be the covariant derivatives of first, second and third order respectively, instead of the partial derivatives, we will reduce significantly the difficulty in further computations. The cost of this co-ordinate transformation is that the Poincaré-Cartan forms must be transformed to the new co-ordinates.

The suited coordinates on $\W_r$ in this case are:

\begin{equation*}
    (\tilde{x}^\mu,\,\tilde{g}_{\alpha\beta},\tilde{\phi},\,\tilde{g}_{\alpha\beta,\mu},\tilde{\phi}_{,\mu},\,\tilde{g}_{\alpha\beta,\mu\nu},\tilde{\phi}_{,\mu\nu},
\,\tilde{g}_{\alpha\beta,\mu\nu\lambda},\tilde{\phi}_{,\mu\nu\lambda},\tilde{p}_g^{\alpha\beta,\mu},\tilde{p}_{\phi}^{,\mu},\tilde{p}_g^{\alpha\beta,\mu\nu},\tilde{p}_{\phi}^{,\mu\nu}).
\end{equation*}

They are related to the previous chart by a set of maps. For the coordinates of spacetime we just set $\tilde{x}^\mu=x^\mu$. The metric is left unchanged related, and the scalar field is changed as:
\begin{align*}
    \tilde{\phi}=&\phi\,;\qquad
    \tilde{\phi}_{;\mu}=\phi_{,\mu}\,;\qquad
    \tilde{\phi}_{;\mu\nu}=\phi_{,\mu\nu}-\phi_{,\gamma}\Gamma_{\mu\nu}^{\gamma}\,;
    \\
    \tilde{\phi}_{;\mu\nu\lambda}=&\phi_{,\mu\nu\lambda}-\phi_{,\gamma\lambda}\Gamma^{\gamma}_{\nu\mu}-\phi_{,\gamma}\Gamma^{\gamma}_{\nu\mu,\lambda}-\phi_{,\gamma\nu}\Gamma^{\gamma}_{\mu\lambda}+\phi_{,\sigma}\Gamma^{\sigma}_{\gamma\nu}\Gamma^{\gamma}_{\mu\lambda}-\phi_{,\mu\gamma}\Gamma^{\gamma}_{\lambda\nu}+\phi_{,\sigma}\Gamma^{\sigma}_{\mu\gamma}\Gamma^{\gamma}_{\lambda\nu}.
\end{align*}

The multimomenta coordinates are mapped via the identity, and they are also unchanged. These expressions are a mere change of coordinates and hence the manifold and all the geometric structure of the theory remains intact. From these relations we get 

\begin{align*}
    \frac{\partial}{\partial g_{\alpha\beta}}=&\frac{\partial}{\partial \tilde{g}_{\alpha\beta}}+\frac{\partial \tilde{\phi}_{;\mu'\nu'}}{\partial g_{\alpha\beta}}\frac{\partial}{\partial \tilde{\phi}_{;\mu'\nu'}}+\frac{\partial \tilde{\phi}_{;\mu'\nu'\lambda'}}{\partial g_{\alpha\beta}}\frac{\partial}{\partial \tilde{\phi}_{;\mu'\nu'\lambda'}},\\
    \frac{\partial}{\partial g_{\alpha\beta,\mu}}=&\frac{\partial}{\partial \tilde{g}_{\alpha\beta,\mu}}+\frac{\partial \tilde{\phi}_{;\mu'\nu'}}{\partial g_{\alpha\beta,\mu}}\frac{\partial}{\partial \tilde{\phi}_{;\mu'\nu'}}+\frac{\partial \tilde{\phi}_{;\mu'\nu'\lambda'}}{\partial g_{\alpha\beta,\mu}}\frac{\partial}{\partial \tilde{\phi}_{;\mu'\nu'\lambda'}},\\
    \frac{\partial}{\partial g_{\alpha\beta,\mu\nu}}=&\frac{\partial}{\partial \tilde{g}_{\alpha\beta,\mu\nu}}+\frac{\partial \tilde{\phi}_{;\mu'\nu'\lambda'}}{\partial g_{\alpha\beta,\mu\nu}}\frac{\partial}{\partial \tilde{\phi}_{;\mu'\nu'\lambda'}},\\
    \frac{\partial}{\partial \phi} =& \frac{\partial}{\partial \tilde{\phi}}\\
    \frac{\partial}{\partial \phi_{,\mu}}=&\frac{\partial}{\partial \tilde{\phi}_{;\mu}}+\frac{\partial \tilde{\phi}_{;\mu'\nu'}}{\partial \phi_{,\mu}}\frac{\partial}{\partial \tilde{\phi}_{;\mu'\nu'}}+\frac{\partial \tilde{\phi}_{;\mu'\nu'\lambda'}}{\partial \phi_{,\mu}}\frac{\partial}{\partial \tilde{\phi}_{;\mu'\nu'\lambda'}},\\
    \frac{\partial}{\partial \phi_{,\mu\nu}}=&\frac{\partial}{\partial \tilde{\phi}_{;\mu\nu}}+\frac{\partial \tilde{\phi}_{;\mu'\nu'\lambda'}}{\partial \phi_{,\mu\nu}}\frac{\partial}{\partial \tilde{\phi}_{;\mu'\nu'\lambda'}},
\end{align*}

where 

\begin{align*}
    \frac{\partial \tilde{\phi}_{;\mu'\nu'}}{\partial g_{\alpha\beta}}=&n(\alpha\beta)\tilde{\phi}_{;\gamma}g^{\gamma(\alpha}\Gamma_{\mu'\nu'}^{\beta)}\\
    \frac{\partial \tilde{\phi}_{;\mu'\nu'}}{\partial g_{\alpha\beta,\mu}}=&\frac{n(\alpha\beta)}{2}\tilde{\phi}_{;\gamma}g^{\gamma\rho}\left[\delta_{\mu'}^{\mu}\delta_{\nu'}^{(\alpha}\delta_{\rho}^{\beta)}+\delta_{\nu'}^{\mu}\delta_{\rho}^{(\alpha}\delta_{\mu'}^{\beta)}-\delta_{\rho}^{\mu}\delta_{\mu'}^{(\alpha}\delta_{\nu'}^{\beta)}\right]\\
    \frac{\partial \tilde{\phi}_{;\mu'\nu'}}{\partial \phi_{,\mu}}=&-\Gamma_{\mu'\nu'}^{\mu}\\
    \frac{\partial \tilde{\phi}_{;\mu'\nu'\lambda'}}{\partial \phi_{,\mu}}=&-\Gamma_{\mu'\nu',\lambda'}^{\mu}+\Gamma_{\gamma\nu'}^{\mu}\Gamma_{\mu'\lambda'}^{\gamma}+\Gamma_{\mu'\gamma}^{\mu}\Gamma_{\lambda'\nu'}^{\gamma}\\
    \frac{\partial \tilde{\phi}_{;\mu'\nu'\lambda'}}{\partial \phi_{,\mu\nu}}=&-3n(\mu\nu)\Gamma_{[\nu'\mu'}^{(\mu}\delta^{\nu)}_{\lambda']}\\
    \frac{\partial \tilde{\phi}_{;\mu'\nu'\lambda'}}{\partial g_{\alpha\beta}}=&n(\alpha\beta)\Bigg\{-\left(\phi_{;\gamma\lambda'}+\phi_{;\tau}\Gamma_{\gamma\lambda'}^{\tau}\right)g^{\gamma(\alpha}\Gamma^{\beta)}_{\nu'\mu'}\nonumber\\
    &+\frac{1}{2}\phi_{;\gamma}\Big[g^{\gamma(\alpha}g^{\beta)\rho}(g_{\mu'\rho,\nu'\lambda'}+g_{\rho\nu',\mu'\lambda'}-g_{\mu'\nu',\rho\lambda'})\nonumber\\
    &-g^{\gamma(\alpha}g^{\beta)\sigma}g^{\rho\delta}g_{\sigma\delta,\lambda'}(g_{\mu'\rho,\nu'}+g_{\rho\nu',\mu'}-g_{\nu'\mu',\rho})\nonumber\\
    &-g^{\gamma\sigma}g^{\rho(\alpha}g^{\beta)\delta}g_{\sigma\delta,\lambda'}(g_{\mu'\rho,\nu'}+g_{\rho\nu',\mu'}-g_{\nu'\mu',\rho})\Big]\nonumber\\
    &+(\phi_{;\gamma\nu'}+\phi_{;\tau}\Gamma_{\gamma\nu'}^{\tau})g^{\gamma(}\Gamma^{\beta)}_{\mu'\lambda'}-\phi_{;\sigma}g^{\sigma(\alpha}\Gamma^{\beta)}_{\gamma\nu'}\Gamma^{\gamma}_{\mu'\sigma'}\nonumber\\
    &-\phi_{;\sigma}\Gamma^{\sigma}_{\gamma\nu'}g^{\gamma(\alpha}\Gamma^{\beta)}_{\mu'\sigma'}+(\phi_{;\mu'\gamma}+\phi_{;\tau}\Gamma^{\tau}_{\mu'\gamma})g^{\gamma(\alpha}\Gamma^{\beta)}_{\gamma'\nu'}\nonumber\\
    &-\phi_{;\sigma}g^{\sigma(\alpha}\Gamma^{\beta)}_{\mu'\gamma}\Gamma^{\gamma}_{\lambda'\nu'}-\phi_{;\sigma}\Gamma^{\sigma}_{\mu'\gamma}g^{\gamma(\alpha}\Gamma^{\beta)}_{\nu'\lambda'}\Bigg\}\\
    \frac{\partial \tilde{\phi}_{;\mu'\nu'\lambda'}}{\partial g_{\alpha\beta,\mu}}=&-\frac{1}{2}n(\alpha\beta)\Big[g^{\gamma\sigma}g^{\rho\delta}\delta^{\mu}_{\lambda'}\delta_{\lambda'}^{(\alpha}\delta^{\beta)}_{\delta}(g_{\mu'\rho,\nu'}+g_{\rho\nu',\mu'}-g_{\nu'\mu',\rho})\nonumber\\
    &+g^{\gamma\sigma}g^{\rho\delta}g_{\sigma\delta,\lambda'}(\delta_{\nu'}^{\mu}\delta_{\mu'}^{(\alpha}\delta_{\rho}^{\beta)}+\delta_{\mu'}^{\mu}\delta_{\rho}^{(\alpha}\delta_{\nu'}^{\beta)}-\delta_{\rho}^{\mu}\delta_{\nu'}^{(\alpha}\delta_{\mu'}^{\beta)})\Big]\\
    \frac{\partial \tilde{\phi}_{;\mu'\nu'\lambda'}}{\partial g_{\alpha\beta,\mu\nu}}=&\frac{1}{2}n(\alpha\beta)n(\mu\nu)g^{\gamma\rho}\left(\delta_{\mu'}^{(\alpha}\delta_{\rho}^{\beta)}\delta_{\nu'}^{(\mu}\delta_{\sigma'}^{\nu)}+\delta_{\rho}^{(\alpha}\delta_{\nu'}^{\beta)}\delta_{\mu'}^{(\mu}\delta_{\sigma'}^{\nu)}-\delta_{\nu'}^{(\alpha}\delta_{\mu'}^{\beta)}\delta_{\rho}^{(\mu}\delta_{\sigma'}^{\nu)}\right)
\end{align*}

It is important to clarify that we have used the standard notation for symmetrization and antisymmetrization of indices. To symmetrize on $n$ indices, we sum over all possible permutations of these indices and divide the result by $n!$ . To antisymmetrize, we go through the same procedure, but weighting each term in the sum by the sign of the permutation. For instance

\begin{align*}
    T^{(\mu \nu \rho)\lambda}=&\frac{1}{3!}\left(T^{\mu \nu \rho \lambda}+T^{\nu \rho \mu \lambda}+T^{\rho\mu \nu \lambda}+T^{\nu\mu \rho \lambda}+T^{\rho \nu\mu \lambda}+T^{\mu \rho\nu \lambda}\right),\\
    T^{\mu}_{\;\;\;[\nu cd]}=&\frac{1}{3!}\left(T^{\mu}_{\;\;\;\nu \rho \sigma}-T^{\mu}_{\;\;\;\rho\nu \sigma}+T^{\mu}_{\;\;\;\rho \sigma\nu}-T^{\mu}_{\;\;\;\sigma\rho\nu}+T^{\mu}_{\;\;\;\sigma\nu \rho}-T^{\mu}_{\;\;\;\nu \sigma\rho}\right).
\end{align*}

Sometimes it is convenient to (anti)symmetrize over indices which are not adjacent. In this case, we use vertical bars to denote that some indices will be excluded. For instance,

\begin{equation*}
    T_{(\mu|\rho\sigma|\nu)}=\frac{1}{2}\left(T_{\mu\rho\sigma\nu}+T_{\nu\rho\sigma\mu}\right).
\end{equation*}

From now on, we shall drop the tildes over the names of the co-ordinates, but it must be understood that the following calculations are performed in the new chart. 

The Liouville forms in $\W_r$, in these new coordinates, become

\begin{align*}
\Theta_r=&
-\hat{H}\d^4x+\sum_{\alpha\leq\beta}p^{\alpha\beta,\mu}\d g_{\alpha\beta}\wedge \d^{3}x_\mu
+\sum_{\alpha\leq\beta}\frac{1}{n(\mu\nu)}p^{\alpha\beta,\mu\nu}\d g_{\alpha\beta,\mu}\wedge \d^{3}x_{\nu}\nonumber\\
&+p^{,\mu\nu}\d \phi \wedge \d^{3}x_{\mu}+\frac{1}{n(\mu\nu)} p^{,\mu\nu}_{\phi} \d \phi_{;\mu} \wedge \d^3 x_{\nu}
\end{align*}

\begin{align*}
\Omega_r=&
\d \hat{H}\wedge \d^4x-\sum_{\alpha\leq\beta}\d p_g^{\alpha\beta,\mu}\wedge \d g_{\alpha\beta}\wedge \d^{3}x_\mu-\sum_{\alpha\leq\beta}\frac{1}{n(\mu\nu)}\d p_g^{\alpha\beta,\mu\nu}\wedge \d g_{\alpha\beta,\mu}\wedge \d^{3}x_{\nu}\nonumber\\
&-\d p_{\phi}^{,\mu\nu}\wedge\d \phi \wedge \d^{3}x_{\mu}-\frac{1}{n(\mu\nu)} \d p_{\phi}^{,\mu\nu}\wedge \d \phi_{;\mu} \wedge \d^3 x_{\nu};
\end{align*}

where

\begin{align*}
    \hat H=&\sum_{\alpha\leq\beta}p_g^{\alpha\beta,\mu} g_{\alpha\beta,\mu}+\sum_{\substack{\alpha\leq\beta \\   \mu\leq\nu}}p_g^{\alpha\beta,\mu\nu} g_{\alpha\beta,\mu\nu}+p_{\phi}^{,\mu} \phi_{;\mu}+\sum_{\mu\leq\nu}p_{\phi}^{,\mu\nu} \phi_{;\mu\nu}+\sum_{\mu\leq\nu}p_{\phi}^{,\mu\nu} \phi_{;\gamma} \Gamma^{\gamma}_{\mu\nu}-\d \hat L
\end{align*}

\begin{align*}
    \d\hat H=&\sum_{\alpha\leq\beta}p_g^{\alpha\beta,\mu}\d g_{\alpha\beta,\mu}+\sum_{\alpha\leq\beta}g_{\alpha\beta,\mu}\d p_g^{\alpha\beta,\mu}+\sum_{\alpha\leq\beta}p_g^{\alpha\beta,\mu\nu}\d g_{\alpha\beta,\mu\nu}+\sum_{\alpha\leq\beta}g_{\alpha\beta,\mu\nu}\d p_g^{\alpha\beta,\mu\nu}\nonumber\\
    &+p_{\phi}^{,\mu}\d \phi_{;\mu}+\phi_{;\mu}\d p_{\phi}^{,\mu}+\sum_{\mu\leq\nu}p_{\phi}^{,\mu\nu}\d \phi_{;\mu\nu}+\sum_{\mu\leq\nu}p_{\phi}^{,\mu\nu}\Gamma^{\gamma}_{\mu\nu}\d \phi_{;\gamma}+\sum_{\mu\leq\nu}p_{\phi}^{,\mu\nu} \phi_{;\gamma} \d\Gamma^{\gamma}_{\mu\nu}\nonumber\\
    &+\sum_{\mu\leq\nu}\left(\phi_{;\mu\nu}+\phi_{;\gamma}\Gamma^{\gamma}_{\mu\nu}\right)\d p_{\phi}^{,\mu\nu}-\d \hat L,
\end{align*}

with

\begin{equation*}
    \d \Gamma_{\mu\nu}^{\gamma}=-\frac{1}{2}g^{\gamma\lambda}g^{\rho\sigma}\left(g_{\nu\rho,\mu}+g_{\rho\mu,\nu}-g_{\mu\nu,\rho}\right)\d g_{\lambda\sigma}+\frac{1}{2}g^{\gamma\rho}\left(\d g_{\nu\rho,\mu}+\d g_{\rho\mu,\nu}-\d g_{\mu\nu,\rho}\right).
\end{equation*}

Provided that we are interested in second order theories, we have

\begin{align*}
    \d \hat{L}=& \frac{\partial \hat{L}}{\partial x^{\mu}}\d x^{\mu}+\frac{\partial \hat{L}}{\partial g_{\alpha\beta}} \d g_{\alpha\beta}+\frac{\partial \hat{L}}{\partial \tilde{g}_{\alpha\beta,\mu}} \d g_{\alpha\beta,\mu}+\frac{\partial \hat{L}}{\partial \tilde{g}_{\alpha\beta,\mu\nu}} \d g_{\alpha\beta,\mu\nu}\nonumber\\
    &+\frac{\partial \hat{L}}{\partial \phi} \d \phi+\frac{\partial \hat{L}}{\partial \phi_{;\mu}} \d \phi_{;\mu}+\frac{\partial \hat{L}}{\partial \phi_{;\mu\nu}} \d \phi_{;\mu\nu}.
\end{align*}

The local expression  of a holonomic multivector field 
${\bf X} \in \mathfrak{X}^4(\mathcal{W}_r)$ in the new coordinates is

\bea
\mathbf{X} &=& 
\bigwedge_{\lambda=0}^3\sum_{\substack{\alpha\leq\beta\\\mu\leq\nu\leq\tau}}
\left[\derpar{}{x^\lambda}+g_{\alpha\beta,\lambda}\left(\frac{\partial}{\partial \tilde{g}_{\alpha\beta}}+\frac{\partial \tilde{\phi}_{;\mu'\nu'}}{\partial g_{\alpha\beta}}\frac{\partial}{\partial \tilde{\phi}_{;\mu'\nu'}}+\frac{\partial \tilde{\phi}_{;\mu'\nu'\lambda'}}{\partial g_{\alpha\beta}}\frac{\partial}{\partial \tilde{\phi}_{;\mu'\nu'\lambda'}}\right)
\right.\nonumber \\
& &
\qquad\qquad\qquad\left. 
+
g_{\alpha\beta,\mu\lambda}\left(\frac{\partial}{\partial \tilde{g}_{\alpha\beta,\mu}}+\frac{\partial \tilde{\phi}_{;\mu'\nu'}}{\partial g_{\alpha\beta,\mu}}\frac{\partial}{\partial \tilde{\phi}_{;\mu'\nu'}}+\frac{\partial \tilde{\phi}_{;\mu'\nu'\lambda'}}{\partial g_{\alpha\beta,\mu}}\frac{\partial}{\partial \tilde{\phi}_{;\mu'\nu'\lambda'}}\right)
\right.\nonumber \\
& &
\qquad\qquad\qquad\left. 
+g_{\alpha\beta,\mu\nu\lambda}\left(\frac{\partial}{\partial \tilde{g}_{\alpha\beta,\mu\nu}}+\frac{\partial \tilde{\phi}_{;\mu'\nu'\lambda'}}{\partial g_{\alpha\beta,\mu\nu}}\frac{\partial}{\partial \tilde{\phi}_{;\mu'\nu'\lambda'}}\right)+\phi_{,\lambda}\derpar{}{\phi}
\right.\nonumber \\
& &
\qquad\qquad\qquad\left.
+\left(\phi_{;\mu\lambda}+\phi_{;\gamma}\Gamma_{\mu\lambda}^{\gamma}\right)\left(\frac{\partial}{\partial \tilde{\phi}_{;\mu}}+\frac{\partial \tilde{\phi}_{\mu'\nu'}}{\partial \phi_{,\mu}}\frac{\partial}{\partial \tilde{\phi}_{;\mu'\nu'}}+\frac{\partial \tilde{\phi}_{;\mu'\nu'\lambda'}}{\partial \phi_{,\mu}}\frac{\partial}{\partial \tilde{\phi}_{;\mu'\nu'\lambda'}}\right)
\right.\nonumber \\
& &
\left.
+\left(\phi_{;\mu\nu\lambda}+\phi_{;\gamma\lambda}\Gamma_{\nu\mu}^{\gamma}-\phi_{;\rho}\Gamma_{\gamma\lambda}^{\rho}\Gamma_{\mu\nu}^{\gamma}+\phi_{;\gamma}\Gamma_{\nu\mu,\lambda}^{\gamma}+\phi_{;\gamma\nu}\Gamma_{\mu\lambda}^{\gamma}+\phi_{;\mu\gamma}\Gamma_{\lambda\nu}^{\gamma}\right)\left(\frac{\partial}{\partial \tilde{\phi}_{;\mu\nu}}+\frac{\partial \tilde{\phi}_{;\mu'\nu'\lambda'}}{\partial \phi_{,\mu\nu}}\frac{\partial}{\partial \tilde{\phi}_{;\mu'\nu'\lambda'}}\right)
\right.\nonumber \\
& &
\qquad\qquad\qquad\left. 
+F_{g\;\alpha\beta,\mu\nu\tau\lambda}\derpar{}{g_{\alpha\beta,\mu\nu\tau}}+
G^{\alpha\beta,\mu}_{g\;\lambda}\derpar{}{p_g^{\alpha\beta,\mu}}+ 
G^{\alpha\beta,\mu\nu}_{g\;\lambda}\derpar{}{p_g^{\alpha\beta,\mu\nu}}
\right.\nonumber \\
& &
\qquad\qquad\qquad\left. 
+F_{\phi\;,\mu\nu\tau\lambda}\derpar{}{\phi_{,\mu\nu\tau}}+
G^{,\mu}_{\phi\;\lambda}\derpar{}{p_{\phi}^{,\mu}}+ 
G^{,\mu\nu}_{\phi\;\lambda}\derpar{}{p_{\phi}^{,\mu\nu}} \right] \ .
\label{locholmv}
\eea

and, if 
\begin{align*}
  \psi(x^\lambda) =   (x^\lambda,\,&\psi_{g\;\alpha\beta}(x^\lambda),
\,\psi_{g\;\alpha\beta,\mu}(x^\lambda),\,\psi_{g\;\alpha\beta,\mu\nu}(x^\lambda),
\,\psi_{\phi}(x^\lambda),\,\psi_{\phi\;;\mu}(x^\lambda),\,\psi_{\phi\;;\mu\nu}(x^\lambda),
\\\nonumber
&\psi_{g\;\alpha\beta,\mu\nu\tau}(x^\lambda),\,\psi_g^{\alpha\beta,\mu}(x^\lambda),\psi_g^{\alpha\beta,\mu\nu}(x^\lambda)\,\psi_{\phi\;,\mu\nu\tau}(x^\lambda),\,\psi_{\phi}^{,\mu}(x^\lambda),\psi_\phi^{,\mu\nu}(x^\lambda))
\end{align*}
is an integral section of ${\bf X}$, its component functions
satisfy the following system of partial differential equations
\beann
\derpar{\psi_{g\;\alpha\beta}}{x^\lambda}=g_{\alpha\beta,\lambda}\circ\psi \ , \
\derpar{\psi_{g\;\alpha\beta,\mu}}{x^\lambda}=g_{\alpha\beta,\mu\lambda}\circ\psi \ , \
\derpar{\psi_{g\;\alpha\beta,\mu\nu}}{x^\lambda}=g_{\alpha\beta,\mu\nu\lambda}\circ\psi \ , 
\nonumber\\
\derpar{\psi_{\phi}}{x^\lambda}=\phi_{;\lambda}\circ\psi \ , \
\derpar{\psi_{\phi\;;\mu}}{x^\lambda}=\phi_{;\mu\lambda}\circ\psi \ , \
\derpar{\psi_{\phi\;;\mu\nu}}{x^\lambda}=\phi_{;\mu\nu\lambda}\circ\psi \ , \
\nonumber\\
\derpar{\psi_{g\;\alpha\beta,\mu\nu\tau}}{x^\lambda}=F_{g\;\alpha\beta,\mu\nu\tau\lambda}\circ\psi \ , \
\derpar{\psi_g^{\alpha\beta,\mu}}{x^\lambda}=G^{\alpha\beta,\mu}_{g\;\lambda}\circ\psi  \ , \
\derpar{\psi_g^{g\;\alpha\beta,\mu\nu}}{x^\lambda}=G^{\alpha\beta,\mu\nu}_{g\;\lambda}\circ\psi  \nonumber\\
\derpar{\psi_{\phi\;,\mu\nu\tau}}{x^\lambda}=F_{\phi\;\alpha\beta,\mu\nu\tau\lambda}\circ\psi \ , \
\derpar{\psi_{\phi}^{,\mu}}{x^\lambda}=G^{,\mu}_{\phi\;\lambda}\circ\psi  \ , \
\derpar{\psi_{\phi}^{,\mu\nu}}{x^\lambda}=G^{,\mu\nu}_{\phi\;\lambda}\circ\psi  \ .
\label{pdesect}
\eeann

\section{Unified Lagrangian-Hamiltonian formalism and the Constraint Algorithm}
\label{Section4}

Now we explicitly calculate the Legendre maps and the corresponding field equations for the multivector fields in the new coordinates.

\subsection{Legendre maps}

\begin{proposition}
\label{prop:GraphLegMapSect}
A section $\psi \in \Gamma(\rho_M^r)$ solution to the equation \eqref{eqn:UnifFieldEqSect} 
takes values in a $140$-codimensional submanifold 
$\jmath_{\Lag_{\mathfrak V}}\colon\mathcal{W}_{\mathcal{L}_{\mathfrak V}} \hookrightarrow \mathcal{W}_r$ which is identified with the graph of
a bundle map $\mathcal{FL}_{\mathfrak V}\colon J^3\pi \to J^{2}\pi^\ddagger$, over $J^1\pi$, defined locally by

\begin{align*}
    \mathcal{FL}_{\mathfrak V}^{\ \ *}\,p_g^{\alpha\beta,\mu}=&
\frac{\partial \hat L}{\partial g_{\alpha\beta,\mu}} - 
\sum_{\nu=0}^{3}\frac{1}{n(\mu\nu)}
X_\nu\left( \frac{\partial \hat L}{\partial g_{\alpha\beta,\mu\nu}}\right)-\frac{1}{2}\left(\frac{\partial \hat{L}}{\partial \phi_{;\mu\alpha}}\phi_{;\gamma}g^{\gamma\beta}+\frac{\partial \hat{L}}{\partial \phi_{;\beta\mu}}\phi_{;\gamma}g^{\gamma\alpha}-\frac{\partial \hat{L}}{\partial \phi_{;\alpha\beta}}\phi_{;\gamma}g^{\gamma\mu}\right)\nonumber\\
=&\hat L_g^{\alpha\beta,\mu},\\
\mathcal{FL}_{\mathfrak V}^{\ \ *}\,p_g^{\alpha\beta,\mu\nu}=&
\frac{\partial \hat L}{\partial g_{\alpha\beta,\mu\nu}},\\
\mathcal{FL}_{\mathfrak V}^{\ \ *}\,p_{\phi}^{,\mu}=&
\frac{\partial \hat L}{\partial \phi_{;\mu}} - 
\sum_{\nu=0}^{3}\frac{1}{n(\mu\nu)}
X_\nu\left( \frac{\partial \hat L}{\partial \phi_{;\mu\nu}}\right)-\frac{\partial \hat{L}}{\partial \phi_{;\gamma\nu}}\Gamma_{\gamma\nu}^{\mu}=
\hat L_{\phi}^{,\mu},\\
\mathcal{FL}_{\mathfrak V}^{\ \ *}\,p_{\phi}^{,\mu\nu}=&
\frac{\partial \hat L}{\partial \phi_{;\mu\nu}}.
\label{Legmap}
\end{align*}

The submanifold 
$\mathcal{W}_{\mathcal{L}_{\mathfrak V}}$ is the graph of a bundle morphism
$\widetilde{\mathcal{FL}}_{\mathfrak V} \colon J^3\pi \to J^2\pi^\dagger$ over $J^1\pi$ defined locally by

\begin{align*}
    \displaystyle\widetilde{\mathcal{FL}}_{\mathfrak V}^{\ \ *}\,p_g^{\alpha\beta,\mu}=&
\frac{\partial \hat L}{\partial g_{\alpha\beta,\mu}} - 
\sum_{\nu=0}^{3}\frac{1}{n(\mu\nu)}
X_\nu\left( \frac{\partial \hat L}{\partial g_{\alpha\beta,\mu\nu}}\right)\nonumber\\
&-\frac{1}{2}\left(\frac{\partial \hat{L}}{\partial \phi_{;\mu\alpha}}\phi_{;\gamma}g^{\gamma\beta}+\frac{\partial \hat{L}}{\partial \phi_{;\beta\mu}}\phi_{;\gamma}g^{\gamma\alpha}-\frac{\partial \hat{L}}{\partial \phi_{;\alpha\beta}}\phi_{;\gamma}g^{\gamma\mu}\right)\nonumber\\
=&\hat L_g^{\alpha\beta,\mu},\\
\displaystyle\widetilde{\mathcal{FL}}_{\mathfrak V}^{\ \ *}\,p_g^{\alpha\beta,\mu\nu}=&
\frac{\partial \hat L}{\partial g_{\alpha\beta,\mu\nu}},\nonumber\\
\displaystyle\widetilde{\mathcal{FL}}_{\mathfrak V}^{\ \ *}\,p_{\phi}^{,\mu}=&
\frac{\partial \hat L}{\partial \phi_{;\mu}} - 
\sum_{\nu=0}^{3}\frac{1}{n(\mu\nu)}
X_\nu\left( \frac{\partial \hat L}{\partial \phi_{;\mu\nu}}\right)-\frac{\partial \hat{L}}{\partial \phi_{;\gamma\nu}}\Gamma_{\gamma\nu}^{\mu}=
\hat L_{\phi}^{,\mu},\\
\displaystyle\widetilde{\mathcal{FL}}_{\mathfrak V}^{\ \ *}\,p_{\phi}^{,\mu\nu}=&
\frac{\partial \hat L}{\partial \phi_{;\mu\nu}},\\
\displaystyle\widetilde{\mathcal{FL}}_{\mathfrak V}^{\ \ *}\,p=&\hat{L}-\sum_{\substack{\alpha\leq\beta\\\mu\leq\nu}}g_{\alpha\beta,\mu\nu}\frac{\partial \hat L}{\partial g_{\alpha\beta,\mu\nu}}-\sum_{\substack{\mu\leq\nu}}\left(\phi_{;\mu\nu}+\phi_{;\gamma}\Gamma^{\gamma}_{\mu\nu}\right)\frac{\partial \hat L}{\partial \phi_{;\mu\nu}}\nonumber\\
&-\phi_{;\mu}\left(\frac{\partial \hat L}{\partial \phi_{;\mu}} - 
\sum_{\nu=0}^{3}\frac{1}{n(\mu\nu)}
X_\nu\left( \frac{\partial \hat L}{\partial \phi_{;\mu\nu}}\right)-\frac{\partial \hat{L}}{\partial \phi_{;\gamma\nu}}\Gamma_{\gamma\nu}^{\mu}\right)\nonumber\\
&-\sum_{\substack{\alpha\leq\beta}}g_{\alpha\beta,\mu}\Bigg[\frac{\partial \hat L}{\partial g_{\alpha\beta,\mu}} - 
\sum_{\nu=0}^{3}\frac{1}{n(\mu\nu)}
X_\nu\left( \frac{\partial \hat L}{\partial g_{\alpha\beta,\mu\nu}}\right)\nonumber\\
&-\frac{1}{2}\left(\frac{\partial \hat{L}}{\partial \phi_{;\mu\alpha}}\phi_{;\gamma}g^{\gamma\beta}+\frac{\partial \hat{L}}{\partial \phi_{;\beta\mu}}\phi_{;\gamma}g^{\gamma\alpha}-\frac{\partial \hat{L}}{\partial \phi_{;\alpha\beta}}\phi_{;\gamma}g^{\gamma\mu}\right)\Bigg].
\end{align*}

\end{proposition}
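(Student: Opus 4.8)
The plan is to read the Legendre relations directly off the field equation \eqref{eqn:UnifFieldEqSect}. Since it must hold for every $X\in\mathfrak{X}(\mathcal{W}_r)$, it is equivalent to the family of scalar identities $\psi^*\inn(\partial/\partial y^A)\Omega_r=0$, where $y^A$ ranges over the fibre coordinates of $\mathcal{W}_r$ and $\psi$ is holonomic. The genuine gauge directions — the third-order jet coordinates spanning $\ker\Omega_r$, cf.\ \eqref{gaugevf} — contract to zero and give nothing, while contracting with the momentum directions $\partial/\partial p_g^{\alpha\beta,\mu}$, $\partial/\partial p_\phi^{,\mu}$, etc.\ merely reproduces the holonomy (velocity) equations, carrying no information about the Legendre map. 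The substantive content therefore comes from the jet directions, which I organise by order.

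First I contract with the acceleration directions $\partial/\partial g_{\alpha\beta,\mu\nu}$ and $\partial/\partial\phi_{;\mu\nu}$. These coordinates enter $\Omega_r$ only through $\d\hat H\wedge\d^4x$, so the contraction produces $(\partial\hat H/\partial g_{\alpha\beta,\mu\nu})\,\d^4x$ and $(\partial\hat H/\partial\phi_{;\mu\nu})\,\d^4x$. Reading $\hat H$ off Section~\ref{Section3} gives $\partial\hat H/\partial g_{\alpha\beta,\mu\nu}=p_g^{\alpha\beta,\mu\nu}-\partial\hat L/\partial g_{\alpha\beta,\mu\nu}$ and the analogous expression for $\phi$, so setting the pullbacks to zero yields the two purely algebraic relations $p_g^{\alpha\beta,\mu\nu}=\partial\hat L/\partial g_{\alpha\beta,\mu\nu}$ and $p_\phi^{,\mu\nu}=\partial\hat L/\partial\phi_{;\mu\nu}$, i.e.\ the second and fourth components of $\mathcal{FL}_{\mathfrak V}$.

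Next I contract with the velocity directions $\partial/\partial g_{\alpha\beta,\mu}$ and $\partial/\partial\phi_{;\mu}$, which now hit two terms of $\Omega_r$: the $\d\hat H\wedge\d^4x$ piece and the coupling piece $-\tfrac{1}{n(\mu\nu)}\d p_g^{\alpha\beta,\mu\nu}\wedge\d g_{\alpha\beta,\mu}\wedge\d^3x_\nu$ (respectively its $\phi$-analogue). On a holonomic section the coupling piece pulls back to the divergence $\tfrac{1}{n(\mu\nu)}D_\nu p_g^{\alpha\beta,\mu\nu}$, which by the top-order constraint already found becomes $\tfrac{1}{n(\mu\nu)}X_\nu(\partial\hat L/\partial g_{\alpha\beta,\mu\nu})$; the $\d\hat H$ piece contributes $p_g^{\alpha\beta,\mu}-\partial\hat L/\partial g_{\alpha\beta,\mu}$ together with the derivative of the $p_\phi^{,\mu\nu}\phi_{;\gamma}\Gamma^\gamma_{\mu\nu}$ summand of $\hat H$. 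Collecting terms reproduces the first and third components of $\mathcal{FL}_{\mathfrak V}$, the explicit $\tfrac12(\cdots)$ and $\Gamma^\mu_{\gamma\nu}$ corrections being exactly the $g$- and $\phi$-derivatives of the Christoffel symbols dictated by the change-of-chart relations of Section~\ref{Section3} (after substituting $p_\phi^{,\mu\nu}=\partial\hat L/\partial\phi_{;\mu\nu}$).

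For the extended map $\widetilde{\mathcal{FL}}_{\mathfrak V}\colon J^3\pi\to J^2\pi^\dagger$ I would use that $\mathcal{W}_{\mathcal L_{\mathfrak V}}$ lies over the Hamiltonian submanifold $\mathcal{W}_o$, whose defining constraint $\hat C-\hat L=0$ reads $p=\hat L-(\text{coupling pairing})$; substituting the four momentum relations just obtained into this equation yields the displayed formula for $\widetilde{\mathcal{FL}}_{\mathfrak V}^*p$, while the remaining components agree with those of $\mathcal{FL}_{\mathfrak V}$ because $\mu\colon J^2\pi^\dagger\to J^2\pi^\ddagger$ only forgets $p$. Since every fibre coordinate of $J^2\pi^\ddagger$ over $J^1\pi$ has thereby been written as a function on $J^3\pi$, the submanifold is the graph of a bundle map over $J^1\pi$, and the codimension is the total count of momentum components fixed by the four families of relations. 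I expect the main obstacle to be the bookkeeping in the velocity-direction relations: the nonlinear dependence of $\phi_{;\mu\nu}$ and $\phi_{;\mu\nu\lambda}$ on the metric jets makes $\partial\hat H/\partial g_{\alpha\beta,\mu}$ spawn several Christoffel-derivative terms that must be matched term-by-term against the stated corrections, and one must use the top-order constraint to trade $D_\nu p_g^{\alpha\beta,\mu\nu}$ for $X_\nu(\partial\hat L/\partial g_{\alpha\beta,\mu\nu})$ while tracking the symmetrisation weights $n(\mu\nu)$ consistently.
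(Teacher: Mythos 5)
Your proposal is correct and follows essentially the same route as the paper: the paper does not prove the proposition separately but derives exactly these relations inside the proof of Theorem~\ref{theo:submanifold}, where contracting $\Omega_r$ with the acceleration directions yields the algebraic constraints \eqref{eqn:UniVec5}--\eqref{eqn:UniVec6} and the velocity directions combined with the tangency/holonomy conditions \eqref{G1}--\eqref{G2} yield \eqref{Wlag1}--\eqref{Wlag2}, i.e.\ the four components of $\mathcal{FL}_{\mathfrak V}$, with $\widetilde{\mathcal{FL}}_{\mathfrak V}^{\ *}p$ then read off the constraint $\hat{\mathcal C}-\hat{\mathcal L}=0$ defining $\mathcal W_o$. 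The only difference is that you phrase the computation for sections via \eqref{eqn:UnifFieldEqSect} while the paper uses semiholonomic multivector fields via \eqref{eqn:UnifDynEqMultiVF}, which the paper itself declares equivalent.
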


The maps $\Leg_{\mathfrak V}$ and $\widetilde{\Leg}_{\mathfrak V}$ are the \textsl{restricted} and
the {\sl extended Legendre maps} (associated with the Lagrangian density
$\Lag_{\mathfrak V}$), and they satisfy that 
$\mathcal{FL}_{\mathfrak V} = \mu \circ \widetilde{\mathcal{FL}}_{\mathfrak V}$.
For every $j^3_x\phi \in J^3\pi$, we have that
$\operatorname{rank}(\widetilde{\mathcal{FL}}_{\mathfrak V}(j^3_x\phi)) = \operatorname{rank}(\mathcal{FL}_{\mathfrak V}(j^3_x\phi))$.

Remember that, according to \cite{art:Saunders_Crampin90}, 
a second-order Lagrangian density $\mathcal{L} \in \Omega^4(J^2\pi)$ is \textsl{regular} if
$$
\operatorname{rank}(\widetilde{\mathcal{FL}}(j^3_x\phi)) 
= \operatorname{rank}(\mathcal{FL}(j^3_x\phi)) = \dim J^{2}\pi + \dim J^{1}\pi - \dim E = \dim J^2\pi^\ddagger \, ,
$$
otherwise, the Lagrangian density is \textsl{singular}.
Regularity is equivalent to demand that $\mathcal{FL} \colon J^3\pi \to J^{2}\pi^\ddagger$ 
is a submersion onto $J^{2}\pi^\ddagger$ and
this implies that there exist local sections of 
$\mathcal{FL}$. If $\mathcal{FL}$ admits a global section
$\Upsilon \colon J^2\pi^\ddagger \to J^3\pi$, 
then the Lagrangian density is said to be {\sl hyperregular}.
Recall that the regularity of $\mathcal{L}$ determines if the section
$\psi \in \Gamma(\rho_M^r)$ solution to the equation \eqref{eqn:UnifFieldEqSect} lies in
$\mathcal{W}_\mathcal{L}$ or in a submanifold
$\mathcal{W}_f \hookrightarrow \mathcal{W}_\mathcal{L}$ where the section $\psi$ takes values.
In order to obtain this {\sl final constraint submanifold}, the best way is to work with the equation \eqref{eqn:UnifDynEqMultiVF} instead of (\ref{eqn:UnifFieldEqSect}).

\subsection{Field equations for multivector fields}

For a generic cubic Horndenski theory we have, at least, the following constraints:

\begin{theorem}
\label{theo:submanifold} Consider the Horndeski Lagrangian with $G_4(\phi,X)=0$ and $G_5(\phi,X)=0$, 
and let ${\cal W}_f\hookrightarrow\W_r$ be the submanifold defined locally by the constraints
$$
p_g^{\alpha\beta,\mu\nu}-\frac{\partial \hat L}{\partial g_{\alpha\beta,\mu\nu}}=0\quad , \quad 
p_g^{\alpha\beta,\mu}-\hat{L}_g^{\alpha\beta,\mu}=0\quad , \quad 
 \hat{L}_g^{\alpha\beta}=0 \quad , \quad 
X_\tau\hat{L}_g^{\alpha\beta}=0 \ ,
$$
$$
p_{\phi}^{,\mu\nu}-\frac{\partial \hat L}{\partial \phi_{;\mu\nu}}=0\quad , \quad 
p_{\phi}^{,\mu}-\hat{L}_{\phi}^{,\mu}=0\quad , \quad 
 \hat{L}_{\phi}=0 \quad , \quad 
X_\tau\hat{L}_{\phi}=0 \ ;
$$

for $0\leq\alpha\leq\beta\leq3$, $0\leq\mu\leq\nu\leq3$ and $0\leq\tau\leq3$. 
Then, there exist classes of  semiholonomic multivector fields 
$\{{\bf X}\}\subset\mathfrak{X}^4({\cal W}_r)$
which are tangent to $\W_f$ and such that
\beq{}\label{eqn:canonicalisomorphism}
\inn{({\bf X})}\Omega_r|_{\W_f}=0 \quad ,\quad\forall {\bf X}\in\{{\bf X}\}\subset\mathfrak{X}^4({\cal W}_r) \ .
\eeq
\end{theorem}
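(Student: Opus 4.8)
The plan is to substitute the generic semiholonomic multivector field $\mathbf{X}\in\mathfrak{X}^4(\W_r)$ of the form \eqref{locholmv}, carrying the undetermined coefficients $F_{g\;\alpha\beta,\mu\nu\tau\lambda}$, $F_{\phi\;,\mu\nu\tau\lambda}$, $G^{\alpha\beta,\mu}_{g\;\lambda}$, $G^{\alpha\beta,\mu\nu}_{g\;\lambda}$, $G^{,\mu}_{\phi\;\lambda}$, $G^{,\mu\nu}_{\phi\;\lambda}$, into the dynamical equation \eqref{eqn:UnifDynEqMultiVF} and to compute $\inn(\mathbf{X})\Omega_r$ explicitly from \eqref{eqn:premultisymplecticform}, using the expansions of $\d\hat H$ and $\d\hat L$ written above. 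Since $\mathbf{X}$ is decomposable of degree $4$ and $\Omega_r$ is a $5$-form, $\inn(\mathbf{X})\Omega_r$ is a $1$-form; I would set it to zero and collect the coefficients of each element of the coordinate coframe, namely $\d x^\lambda$, $\d g_{\alpha\beta}$, $\d\phi$, $\d g_{\alpha\beta,\mu}$, $\d\phi_{;\mu}$, $\d g_{\alpha\beta,\mu\nu}$, $\d\phi_{;\mu\nu}$ and the momentum differentials, obtaining the full algebraic system to be analysed.

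First I would read off the relations coming from the momentum directions. Matching the coefficients of $\d g_{\alpha\beta,\mu\nu}$ and $\d\phi_{;\mu\nu}$ reproduces the top Legendre relations $p_g^{\alpha\beta,\mu\nu}=\partial\hat L/\partial g_{\alpha\beta,\mu\nu}$ and $p_\phi^{,\mu\nu}=\partial\hat L/\partial\phi_{;\mu\nu}$, while the coefficients of $\d g_{\alpha\beta,\mu}$ and $\d\phi_{;\mu}$, after reabsorbing the total-derivative contributions carried by the $\d p_g^{\alpha\beta,\mu\nu}\wedge\d g_{\alpha\beta,\mu}\wedge\d^3 x_\nu$ and $\d p_\phi^{,\mu\nu}\wedge\d\phi_{;\mu}\wedge\d^3 x_\nu$ pieces of $\Omega_r$, give $p_g^{\alpha\beta,\mu}=\hat L_g^{\alpha\beta,\mu}$ and $p_\phi^{,\mu}=\hat L_\phi^{,\mu}$. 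These are precisely the Legendre relations of Proposition~\ref{prop:GraphLegMapSect}, so every solution is forced onto $\mathcal{W}_{\Lag_{\mathfrak V}}=\operatorname{graph}(\widetilde{\mathcal{FL}}_{\mathfrak V})$ and the first two families of constraints in the statement are recovered. Here the third-order directions $\partial/\partial g_{\alpha\beta,\mu\nu\tau}$ and $\partial/\partial\phi_{,\mu\nu\tau}$ lie in $\ker\Omega_r$ by \eqref{gaugevf}, so the coefficients $F$ never enter the field equation and remain free, as expected for gauge directions.

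The decisive step is the emergence of the primary constraints $\hat L_g^{\alpha\beta}=0$ and $\hat L_\phi=0$. For the cubic Horndeski Lagrangian the only second-order metric term is $R$, which is affine in $g_{\alpha\beta,\mu\nu}$, and the scalar enters the second derivatives only linearly through $\square\phi$ in $L_3$; hence the Hessian blocks $\partial^2\hat L/\partial g_{\alpha\beta,\mu\nu}\,\partial g_{\gamma\delta,\rho\sigma}$, $\partial^2\hat L/\partial\phi_{;\mu\nu}\,\partial\phi_{;\rho\sigma}$ and the mixed block all vanish, so that $\partial\hat L/\partial g_{\alpha\beta,\mu\nu}$ and $\partial\hat L/\partial\phi_{;\mu\nu}$ are independent of the second derivatives. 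Collecting the coefficients of $\d g_{\alpha\beta}$ and $\d\phi$ gives the evolution equations for the lower momenta; substituting the Legendre relations just found and using this affine structure, I expect a part of the system to be free of every undetermined coefficient $F$ and $G$ of $\mathbf{X}$ and to reduce to the coordinate conditions $\hat L_g^{\alpha\beta}=0$ and $\hat L_\phi=0$, which must therefore be imposed as constraints on $\mathcal{W}_{\Lag_{\mathfrak V}}$. This is exactly where the specific structure of the theory — and the diffeomorphism invariance inherited from $R$ — is used.

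Finally I would run the constraint algorithm of \cite{deLeon}. Demanding that each of the four vector fields $X_\lambda$ building $\mathbf{X}$ be tangent to the primary constraint submanifold amounts to $X_\lambda(\hat L_g^{\alpha\beta})=0$ and $X_\lambda(\hat L_\phi)=0$ there; since on a semiholonomic $\mathbf{X}$ the factors $X_\lambda$ act on functions as the total derivatives already denoted $X_\tau(\cdot)$ in Proposition~\ref{prop:GraphLegMapSect}, this yields precisely the secondary constraints $X_\tau\hat L_g^{\alpha\beta}=0$ and $X_\tau\hat L_\phi=0$. To close the argument I would verify that imposing tangency to these new constraints does not, for a generic choice of $G_2$ and $G_3$, produce further independent conditions on the base, but only fixes additional combinations of the coefficients $G$, leaving the kernel \eqref{gaugevf} free; this makes $\mathcal{W}_f$ stable and exhibits a nonempty class $\{\mathbf{X}\}$ of tangent semiholonomic solutions of \eqref{eqn:canonicalisomorphism}. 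The main obstacle is this last stabilization together with the length of the contraction: one must control the combinatorial factors $n(\mu\nu)$, the index symmetrizations, and above all the coordinate-change terms $\partial\tilde\phi_{;\cdots}/\partial(\cdots)$ and the $\Gamma$-dependent pieces that couple the metric and scalar sectors, and one must check that the dangerous components genuinely decouple — so that $\hat L_g^{\alpha\beta}$ and $\hat L_\phi$ are honest constraints rather than equations fixing $F$ — and that the tangency conditions for the secondary constraints are mutually compatible.
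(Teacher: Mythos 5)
Your overall route is the paper's: contract the semiholonomic representative \eqref{locholmv} with $\Omega_r$, read off the primary constraints, and iterate tangency conditions until the algorithm closes on $\W_f$. Two points, however, do not match how the computation actually unfolds, and the second one would trip you up if you carried the plan out literally.

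First, only the top Legendre relations $p_g^{\alpha\beta,\mu\nu}=\partial\hat L/\partial g_{\alpha\beta,\mu\nu}$ and $p_\phi^{,\mu\nu}=\partial\hat L/\partial\phi_{;\mu\nu}$ come directly from coefficient matching (the paper's \eqref{eqn:UniVec5}--\eqref{eqn:UniVec6}). The coefficients of $\d g_{\alpha\beta,\mu}$, $\d\phi_{;\mu}$, $\d g_{\alpha\beta}$ and $\d\phi$ each contain an undetermined trace, $\sum_\nu\frac{1}{n(\mu\nu)}G^{\alpha\beta,\mu\nu}_{g\;\nu}$, $G^{\alpha\beta,\mu}_{g\;\mu}$, etc.\ (equations \eqref{eqn:UniVec1}--\eqref{eqn:UniVec4}), so by themselves they determine components of ${\bf X}$ rather than producing constraints. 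What you call ``reabsorbing the total-derivative contributions'' and ``substituting the Legendre relations'' is really the tangency condition to the previous constraint level: one must first impose $\Lie(X_\tau)(p^{\mu\nu}-\partial\hat L/\partial(\cdot))|_{\W_c}=0$ to get \eqref{G1}--\eqref{G2}, contract, and only then does $p_g^{\alpha\beta,\mu}=\hat L_g^{\alpha\beta,\mu}$ become a constraint; the same mechanism, one level up, is what turns the coefficient equations for $\d g_{\alpha\beta}$ and $\d\phi$ into $\hat L_g^{\alpha\beta}=0$ and $\hat L_\phi=0$. Your plan only invokes tangency explicitly at the step from $\hat L=0$ to $X_\tau\hat L=0$; it is needed at every level before that as well.

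Second, and this is the genuine gap, your closing step is backwards. You expect the tangency conditions for the secondary constraints $X_\tau(\hat L_g^{\alpha\beta})=0$, $X_\tau(\hat L_\phi)=0$ to ``fix additional combinations of the coefficients $G$, leaving the kernel \eqref{gaugevf} free.'' But $\hat L_g^{\alpha\beta}$ and $\hat L_\phi$ are momentum-independent and project onto $J^2\pi$, so $X_\tau$ of them depends on the third-order velocities $g_{\alpha\beta,\mu\nu\tau}$, $\phi_{,\mu\nu\tau}$ and on nothing in the momentum sector; consequently $X_\sigma(X_\tau(\hat L))$ is linear in $F_{g\;\alpha\beta,\mu\nu\tau\lambda}$ and $F_{\phi\;,\mu\nu\tau\lambda}$ and contains no $G$'s at all. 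It is precisely these final conditions that determine the $F$'s (which therefore do not remain free, even though those directions lie in $\ker\Omega_r$ and never entered \eqref{eqn:UnifDynEqMultiVF}), and it is their solvability for the $F$'s that stops the algorithm at $\W_f$ without generating tertiary constraints. As the paper notes after the proof, integrability of the resulting multivector fields is a further issue that may impose additional constraints for particular $G_2$, $G_3$, so the existence claim in the theorem is only for semiholonomic classes, consistent with your statement but worth keeping separate from the stabilization argument.
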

\begin{proof}
In order to find the final submanifold $\W_f$ we use a local
coordinate procedure which is equivalent to the
constraint algorithm for premultisymplectic field theories.
Bearing in mind \eqref{locholmv}, 
the local expression of a representative of a class of a 
semiholonomic multivector fields, not necessarily integrable, is, in this case, \ref{locholmv}

then, equation \eqref{eqn:UnifDynEqMultiVF} leads to
\bea\label{eqn:UniVec1}
G^{\alpha\beta,\mu}_{g\;\mu}-\frac{\partial\hat{L}}{\partial g_{\alpha\beta}}-\sum_{\mu\leq\nu}p_{\phi}^{,\mu\nu}\phi_{;\gamma}g^{\gamma\alpha}\Gamma_{\mu\nu}^{\beta}=0 \ ,
\\ \label{eqn:UniVec2}
G^{,\mu}_{\phi\;\mu}-\frac{\partial \hat{L}}{\partial \phi}=0 \ ,
\\ \label{eqn:UniVec3}
\sum_{\nu=0}^3 \frac{1}{n(\mu\nu)}G^{\alpha\beta,\mu\nu}_{g\;\nu}-\frac{\partial\hat{L}}{\partial \tilde{g}_{\alpha\beta,\mu}}+p_g^{\alpha\beta,\mu}+\sum_{\rho\leq\sigma}\phi_{;\gamma}p^{,\rho\sigma}_\phi\frac{\partial \Gamma^\gamma_{\rho\sigma}}{\partial \tilde{g}_{\alpha\beta,\mu}}=0 \ ,
\\ \label{eqn:UniVec4}
\sum_{\nu=0}^3 \frac{1}{n(\mu\nu)}G^{,\mu\nu}_{\phi\;\nu}-\frac{\partial\hat{L}}{\partial\phi_{;\mu}}+p_{\phi}^{,\gamma\nu}\Gamma_{\gamma\nu}^{\mu}+p_{\phi}^{,\mu}=0 \ ,
\\ \label{eqn:UniVec5}
p_{g}^{\alpha\beta,\mu\nu}-\frac{\partial \hat{L}}{\partial \tilde{g}_{\alpha\beta,\mu\nu}}=0 \ ,
\\ \label{eqn:UniVec6}
p_{\phi}^{,\mu\nu}-\frac{\partial \hat{L}}{\partial \phi_{;\mu\nu}}=0 \ .
\eea
Expressions \eqref{eqn:UniVec5} and \eqref{eqn:UniVec6} are constraints that define the compatibility submanifold $\mathcal{W}_c\hookrightarrow\W_r$. 
If we require tangency of the multivector field to $\mathcal{W}_c$, 
$$
\Lie(X_\tau)\left(p_{g}^{\alpha\beta,\mu\nu}-\frac{\partial \hat{L}}{\partial g_{\alpha\beta,\mu\nu}}\right)\vert_{\W_c}=0 \ ,
$$
$$
\Lie(X_\tau)\left(p_{\phi}^{,\mu\nu}-\frac{\partial \hat{L}}{\partial \phi_{;\mu\nu}}\right)\vert_{\W_c}=0 \ ,
$$

we get

\beq
G^{\alpha\beta,\mu\nu}_{g\;\tau}=X_{\tau}\left(\frac{\partial\hat{L}}{\partial \tilde{g}_{\alpha\beta,\mu\nu}}\right)
\quad ; \quad \mbox{\rm (on $\W_c$)} \ ,
\label{G1}
\eeq

\beq
G^{,\mu\nu}_{\phi\;\tau}=X_{\tau}\left(\frac{\partial\hat{L}}{\partial \phi_{;\mu\nu}}\right)
\quad ; \quad \mbox{\rm (on $\W_c$)} \ .
\label{G2}
\eeq

Contracting $\tau$ and $\nu$ and combining these expressions with \eqref{eqn:UniVec3} and \eqref{eqn:UniVec4} leads to

\beq
0=\sum_{\nu=0}^3\frac{1}{n(\mu\nu)}X_{\nu}\left(\frac{\partial \hat{L}}{\partial \tilde{g}_{\alpha\beta,\mu\nu}}\right)-\frac{\partial\hat{L}}{\partial \tilde{g}_{\alpha\beta,\mu}}+p_g^{\alpha\beta,\mu}+\sum_{\rho\leq\sigma}\phi_{;\gamma}p^{,\rho\sigma}_\phi\frac{\partial \Gamma^\gamma_{\rho\sigma}}{\partial \tilde{g}_{\alpha\beta,\mu}}
\quad ; \quad \mbox{\rm (on $\W_c$)} \ ,
\label{Wlag1}
\eeq

\beq
0=\sum_{\nu=0}^3\frac{1}{n(\mu\nu)}X_{\nu}\left(\frac{\partial\hat{L}}{\partial \phi_{;\mu\nu}}\right)-\frac{\partial \hat{L}}{\partial \phi_{;\mu}}+p_{\phi}^{,\gamma\nu}\Gamma_{\gamma\nu}^{\mu}+p_{\phi}^{,\mu}
\quad ; \quad \mbox{\rm (on $\W_c$)} \ ,
\label{Wlag2}
\eeq

which can be rewritten as $
p_g^{\alpha\beta,\mu}=\hat{L}^{\alpha\beta,\mu}_g\,;\quad p_{\phi}^{,\mu}=\hat{L}_\phi^{\mu}\,$, respectively. These constraints define the submanifold $\mathcal{W}_\mathcal{L}\hookrightarrow\mathcal{W}_c$.
Imposing tangency conditions on these, gives

$$
G^{\alpha\beta,\mu}_{g\;\tau}=X_\tau\left(\hat{L}^{\alpha\beta,\mu}_g\right)\,;\quad G^{,\mu}_{\phi\;\tau}=X_\tau\left(\hat{L}_\phi^{\mu}\right)\,;\quad \mbox{\rm (on $\W_{\Lag}$)} 
$$

If we contract $\mu$ and $\tau$, and use \eqref{eqn:UniVec1}, \eqref{eqn:UniVec2}, \eqref{eqn:UniVec5}, \eqref{eqn:UniVec6}, \eqref{G1} and \eqref{G2}, we get

\begin{align}\label{eleqns1}
    0=&\frac{\partial \hat{L}}{\partial g_{\alpha\beta}}-X_{\mu}\left(\hat{L}^{\alpha\beta,\mu}_g\right)+\sum_{\mu\leq\nu}p_{\phi}^{,\mu\nu}\phi_{;\gamma}g^{\gamma\alpha}\Gamma_{\mu\nu}^{\beta}    \quad ; \quad \mbox{\rm (on $\W_{\Lag}$)} \ ,
\end{align}
\begin{align}\label{eleqns2}
    0=&\frac{\partial \hat{L}}{\partial \phi}-X_\tau\left(\hat{L}_\phi^{\mu}\right)    \quad ; \quad \mbox{\rm (on $\W_{\Lag}$)} \ .
\end{align}

These results hold for the full Horndeski's theory since up to this point we have only assumed that our Lagrangian is constructed out of the metric tensor, a scalar field and its first and second order derivatives. To provide a better insight of the physical meaning of expressions (\ref{eleqns1}) and (\ref{eleqns2}) we shall consider the Horndeski Lagrangian with $G_4(\phi,X)=0$ and $G_5(\phi,X)=0$ and after carefully computing these expressions explicitly, we get

\begin{align}
    \hat{L}_\phi^{,\mu}=&-\sqrt{-g}\Bigg\{g^{\mu\nu}\phi_{;\nu}\left[1+\frac{\partial G_2{\phi,X}}{\partial X}+\frac{\partial G_3(\phi,X)}{\partial \phi}+\frac{\partial G_3(\phi,X)}{\partial X}\square\phi\right]\nonumber\\
    &+\sum_{\nu=0}^{3}\frac{1}{n(\mu\nu)}g^{\mu\nu}g^{\alpha\beta}\phi_{;\alpha}\left(\phi_{;\beta\nu}+\phi_{;\gamma}\Gamma^{\gamma}_{\beta\nu}\right)\frac{\partial G_3(\phi,X)}{\partial X}\Bigg\}
\end{align}

\begin{align}
    \hat{L}_g^{\alpha\beta,\mu}=\frac{1}{2}\sqrt{-g}\Big\{g_{\rho \lambda,\sigma}\Big[&-3g^{\rho \lambda}g^{\mu(\alpha}g^{\beta) \lambda}+2g^{\rho \lambda}g^{\mu(\alpha}g^{\beta)\sigma}+2g^{\alpha\beta}g^{\rho \sigma}g^{\mu \lambda}\nonumber\\
    &+3g^{\mu \sigma}g^{\lambda(\alpha}g^{\beta)\rho}-2g^{\rho\mu}g^{\lambda(\alpha}g^{\beta)\sigma}-g^{\alpha\beta}g^{\rho \lambda}g^{\mu \sigma}\nonumber\\
    &-\frac{1}{2}g^{\rho b}g^{\mu \sigma}g^{\alpha\beta}+g^{\mu(\rho}g^{\lambda)c}g^{\alpha\beta}+g^{\mu \sigma}g^{\alpha(\rho}g^{\lambda)\beta}\Big]\nonumber\\
    &-G_3(\phi,X)\phi_{;\gamma}(g^{\gamma\beta}g^{\alpha\mu}+g^{\gamma\alpha}g^{\beta\mu}-g^{\gamma\mu}g^{\alpha\beta})\Big\}\,.
\end{align}

Therefore:

\begin{align}\label{eqn:ELG}
\hat L^{\alpha\beta}_g=&
-\sqrt{g}n(\alpha\beta)\Bigg[R^{\alpha\beta}-\frac{1}{2}\left(R+X+G_2+2X\frac{\partial G_3}{\partial \phi}\right)g^{\alpha\beta}-\frac{1}{2}\left(1+\frac{\partial G_2}{\partial X}+2\frac{\partial G_3}{\partial \phi}\right)g^{\alpha\rho}g^{\beta\sigma}\phi_{;\rho}\phi_{;\sigma}\nonumber\\
&+\frac{1}{2}\frac{\partial G_3}{\partial X}\left(-g^{\rho\sigma}g^{\alpha\delta}g^{\beta\sigma}\phi_{;\rho\sigma}\phi_{;\delta}\phi_{;\sigma}+2g^{\alpha\rho}g^{\beta\sigma}g^{\gamma\delta}\phi_{;(\rho}\phi_{;|\gamma|\sigma)}\phi_{;\delta}\right)\Bigg]=0\quad ; \quad \mbox{\rm (on $\W_{\Lag_{\mathfrak V}}$)} \ ,
\end{align}

\begin{align}\label{eqn:ELPHI}
\hat L_{\phi}=&-\sqrt{-g}n(\alpha\beta)\Bigg[-\left(1+\frac{\partial G_2}{\partial X}+2X\frac{\partial^2 G_2}{\partial X^2}+2\frac{\partial G_3}{\partial \phi}+2X\frac{\partial^2 G_3}{\partial X\partial \phi}\right)g^{\mu\nu}\phi_{;\mu\nu}\nonumber\\
&+g^{\mu\rho}g^{\nu\sigma}R_{\mu\nu}\frac{\partial G_3}{\partial X}\phi_{;\rho}\phi_{;\sigma}-\left(\frac{\partial^2 G_2}{\partial X^2}+2\frac{\partial^2 G_3}{\partial X\partial \phi}\right)\left(g^{\mu\nu}\phi_{;\mu\nu}g^{\rho\sigma}\phi_{;\rho}\phi_{;\sigma}-g^{\rho\nu}g^{\sigma\mu}\phi_{;\rho}\phi_{;\sigma}\phi_{;\nu\mu}\right)\nonumber\\
&-\frac{\partial G_3}{\partial X}\left(g^{\mu\nu}\phi_{;\mu\nu}g^{\rho\sigma}\phi_{;\rho\sigma}-g^{\mu\rho}g^{\nu\sigma}\phi_{;\mu\nu}\phi_{;\rho\sigma}\right)+\left(2X\frac{\partial^2 G_3}{\partial \phi^2}+\frac{\partial^2 G_2}{\partial X\partial \phi}\right)-\frac{\partial G_2}{\partial \phi}\nonumber\\
&+\frac{\partial^2 G_3}{\partial X^2}g^{\mu\sigma}g^{\nu\rho}\phi_{;\mu}\phi_{;\nu}\left(g^{\gamma\delta}\phi_{;\gamma\delta}\phi_{;\rho\sigma}-g^{\gamma\delta}\phi_{;\gamma\sigma}\phi_{;\rho\delta}\right)\Bigg]=0\quad,
\end{align}

Expresions (\ref{eqn:ELG}) and (\ref{eqn:ELPHI}) are the Euler-Lagrange equations, and  when they are evaluated on sections in $\W_{\Lag_{\mathfrak V}}$ we recover  the cubic Horndeski equations of motion.

It is important to remark that $\hat L_g^{\alpha\beta}$ and $\hat L_{\phi}$ do not depend on any of the momenta. There is a dependence on the velocities and accelerations of both the metric and the scalar field, but not on higher-order velocities of any of them. Hence, $\hat L_g^{\alpha\beta}$ and $\hat L_{\phi}$ project onto ${J^2\pi}$ and they can be regarded as new constraints defining locally a submanifold $\mathcal{W}_1\hookrightarrow\mathcal{W}_{\mathcal{L}_{\mathfrak V}}\hookrightarrow\mathcal{W}_r$. Once again, demanding tangency of the multivector field to this new manifold we get

\begin{align*}
\Lie(X_\tau)\hat{L}_g^{\alpha\beta}\vert_{\W_1}=0 \ ,\\
\Lie(X_\tau)\hat{L}_{\phi}\vert_{\W_1}=0 \ ,
\end{align*}

which are just

\begin{align*}
X_\tau\left(\hat{L}_g^{\alpha\beta}\right)&=0 ; \quad \mbox{\rm (on $\W_1$)}\ ,\\
X_\tau\left(\hat{L}_{\phi}\right)&=0; \quad \mbox{\rm (on $\W_1$)}\ .
\end{align*}

These are new constraints again that project onto $J^3\pi$. They define locally the submanifold 
$\mathcal{W}_f\hookrightarrow\mathcal{W}_1\hookrightarrow\mathcal{W}_\mathcal{L}\hookrightarrow\mathcal{W}_r$. This manifold $\mathcal{W}_f$ is the final constraint submanifold because there exist
holonomic multivector fields, solutions to (\ref{eqn:canonicalisomorphism}).
Finally, the new tangency conditions, 

\begin{align*}
\Lie(X_\sigma)X_{\tau}\left(\hat{L}_g^{\alpha\beta}\right)\vert_{\W_1}=0 \ ,\\
\Lie(X_\sigma)X_{\tau}\left(\hat{L}_{\phi}\right)\vert_{\W_1}=0 \ ,
\end{align*}

which are explicitly

\begin{align*}
X_\sigma\left(X_{\tau}\left(\hat{L}_g^{\alpha\beta}\right)\right)=0 ; \quad \mbox{\rm (on $\W_1$)}\ ,\\
X_\sigma\left(X_{\tau}\left(\hat{L}_{\phi}\right)\right)=0 \quad \mbox{\rm (on $\W_1$)}\ .
\end{align*}

These allow us to determine the remaining components of (\ref{locholmv}),
$F_{g\;\alpha\beta,\mu\nu\tau\lambda}$ and $F_{\phi\;,\mu\nu\tau\lambda}$. Finally, the complete set of constraints that define
the final constraint submanifold $\W_f\hookrightarrow\W_r$ 
are 

$$
p_g^{\alpha\beta,\mu\nu}-\frac{\partial \hat L}{\partial g_{\alpha\beta,\mu\nu}}=0\quad , \quad 
p_g^{\alpha\beta,\mu}-\hat{L}_g^{\alpha\beta,\mu}=0\quad , \quad 
 \hat{L}_g^{\alpha\beta}=0 \quad , \quad 
X_\tau\left(\hat{L}_g^{\alpha\beta}\right)=0 \ ,
$$
$$
p_{\phi}^{,\mu\nu}-\frac{\partial \hat L}{\partial \phi_{;\mu\nu}}=0\quad , \quad 
p_{\phi}^{,\mu}-\hat{L}_{\phi}^{,\mu}=0\quad , \quad 
 \hat{L}_{\phi}=0 \quad , \quad 
X_\tau\left(\hat{L}_{\phi}\right)=0 \ ;
$$
\end{proof}

In contrasts with the Hilbert-Einstein case, we cannot assume that there exists a holonomic solution (that is, integrable) in $\W_f$. Depending on the particular choice of $G_1$, $G_2$ and $G_3$, new constraints may appear when demanding integrability of the multivector field in the constraint algorithm.

\subsection{Field equations for sections}

Provided that we now know the solution for the holonomic multivector fields, we can evaluate equation (\ref{pdesect}) to recover the field equations for sections. 

\begin{align}
\derpar{\psi_{g\;\alpha\beta}}{x^\mu}&=\psi_{g\;\alpha\beta,\mu} \ , \ \label{eqn:holonomy1}\\
\derpar{\psi_{g\;\alpha\beta,\mu}}{x^\nu}&=\psi_{g\;\alpha\beta,\mu\nu} \ , \ \label{eqn:holonomy2}\\
\nabla_{\mu}\psi_{\phi}&=\psi_{\phi\; ;\mu} \ , \ \label{eqn:holonomy3}\\
\nabla_{\nu}\psi_{\phi\;;\mu}&=\psi_{\phi\; ;\mu\nu} \ , \ \label{eqn:holonomy4}\\
\derpar{\psi_g^{\alpha\beta,\mu}}{x^\mu}&=\derpar{\hat L}{g_{\alpha\beta}}+\psi_{\phi}^{,\mu\nu}\psi_{\phi\; ;\gamma}\psi_{g}^{\gamma\alpha}\Gamma^{\beta}_{\mu\nu}  \ , \ \\
\derpar{\psi_{\phi}^{,\mu}}{x^\mu}&=\derpar{\hat L}{\phi}  \ , \\
\derpar{\psi_g^{g\;\alpha\beta,\mu\nu}}{x^\mu}&=\derpar{\hat L}{g_{\alpha\beta,\mu}}-\psi_{g}^{\alpha\beta,\mu}-\frac{1}{2}\psi_{\phi\; ;\gamma}\left(\psi_{\phi}^{,\mu\alpha}\psi_{g}^{\gamma\beta}+\psi_{\phi}^{,\beta\mu}\psi_{g}^{\gamma\alpha}-\psi_{\phi}^{,\alpha\beta}\psi_{g}^{\gamma\mu}\right) \label{eqn:transformacionlegendresecciones1}\\
\derpar{\psi_{\phi}^{,\mu\nu}}{x^\nu}&=\derpar{\hat L}{\phi_{;\mu}}-\psi_{\phi}^{,\gamma\nu}\Gamma^{\mu}_{\gamma\nu}-\psi_{\phi}^{,\mu}  \ ,\\
\psi_{g}^{\alpha\beta,\mu\nu}&=\derpar{\hat L}{g_{\alpha\beta,\mu\nu}}\ ,\\
\psi_{\phi}^{,\mu\nu}&=\derpar{\hat L}{\phi_{;\mu\nu}}\ \label{eqn:transformacionlegendresecciones4}.
\end{align}

Equations (\ref{eqn:holonomy1})-(\ref{eqn:holonomy4}) are the holonomy conditions for the multivector field. Equations (\ref{eqn:transformacionlegendresecciones1})-(\ref{eqn:transformacionlegendresecciones4}) define the Legendre transformations. 


\section{Hamiltonian formalism}
\label{Section5}

The covariant Hamiltonian formalism takes place in the image of the Legendre Transformation. For singular Lagrangian this space could be highly degenerate. The Legendre maps in our case are given by Proposition \ref{prop:GraphLegMapSect}.
Then
$$
\Tan_{{j^3_x}\phi}\mathcal{FL}_{\mathfrak V}=\left( \begin{array}{ccccccccc}
1 & 0 & 0 & 0 & 0 & 0 & 0 & 0 & 0\\
0 & 1 & 0 & 0 & 0 & 0 & 0 & 0 & 0\\
0 & 0 & 1 & 0 & 0 & 0 & 0 & 0 & 0\\
0 & 0 & 0 & 1 & 0 & 0 & 0 & 0 & 0\\
0 & 0 & 0 & 0 & 1 & 0 & 0 & 0 & 0\\
0 & \displaystyle\frac{\partial \hat L_g^{\alpha\beta,\mu}}{\partial g_{\gamma\delta}} & \displaystyle\frac{\partial \hat L_g^{\alpha\beta,\mu}}{\partial \phi} & \displaystyle\frac{\partial \hat L_g^{\alpha\beta,\mu}}{\partial g_{\gamma\delta,\tau}} & \displaystyle\frac{\partial \hat L_g^{\alpha\beta,\mu}}{\partial \phi_{;\tau}} & 0 & 0 & 0 & 0\\
0 & \displaystyle\frac{\partial \hat L_{\phi}^{,\mu}}{\partial g_{\gamma\delta}} & \displaystyle\frac{\partial \hat L_{\phi}^{,\mu}}{\partial \phi} & \displaystyle\frac{\partial \hat L_{\phi}^{,\mu}}{\partial g_{\gamma\delta,\tau}} & \displaystyle\frac{\partial \hat L_{\phi}^{,\mu}}{\partial \phi_{;\tau}} & \displaystyle\frac{\partial \hat L_{\phi}^{,\mu}}{\partial \phi_{;\tau\lambda}} & 0 & 0 & 0\\
0 & \displaystyle\frac{\partial^2\hat L}{\partial g_{\gamma\delta}\partial g_{\alpha\beta,\mu\nu}} & \displaystyle\frac{\partial^2\hat L}{\partial \phi\partial g_{\alpha\beta,\mu\nu}} & \displaystyle\frac{\partial^2\hat L}{\partial g_{\gamma\delta,\tau}\partial g_{\alpha\beta,\mu\nu}} & \displaystyle\frac{\partial^2\hat L}{\partial \phi_{;\tau}\partial g_{\alpha\beta,\mu\nu}} & 0 & 0 & 0 & 0\\
0 & \displaystyle\frac{\partial^2\hat L}{\partial g_{\gamma\delta}\partial \phi_{;\mu\nu}} & \displaystyle\frac{\partial^2\hat L}{\partial \phi\partial \phi_{;\mu\nu}} & \displaystyle\frac{\partial^2\hat L}{\partial g_{\gamma\delta,\tau}\partial \phi_{;\mu\nu}} & \displaystyle\frac{\partial^2\hat L}{\partial \phi_{;\tau}\partial \phi_{;\mu\nu}} & 0 & 0 & 0 & 0
\end{array} \right) \ 
$$

Notice that, in general, ${\rm rank}(\Tan_{{j^3_x}\phi}\mathcal{FL}_{\mathfrak V})\geq59$, depending on the arbitrary function $G_3(\phi,X)$. Also, locally
\beq
\ker\,(\Leg_{{\mathfrak V}})_*=
\ker\,\Omega_{\Lag_{\mathfrak V}}\supset{\left<\frac{\partial}{\partial g_{\alpha\beta,\mu\nu}},\frac{\partial}{\partial g_{\alpha\beta,\mu\nu\lambda}},\frac{\partial}{\partial\phi_{;\mu\nu\lambda}}\right>}_{0\leq\alpha\leq\beta\leq3;\, 0\leq\mu\leq\nu\leq\lambda\leq3}\ ,
\label{kerfl}
\eeq
hence $\mathcal{FL}_{\mathfrak V}$ is highly degerated. 
 
We denote $\widetilde{\mathcal{P}}=\widetilde{\mathcal{FL}}_{\mathfrak V}(J^3\pi) \stackrel{\tilde{\jmath}}{\hookrightarrow} J^2\pi^\dagger$
and $\mathcal{P}=\mathcal{FL}_{\mathfrak V}(J^3\pi)\stackrel{\jmath}{\hookrightarrow} J^2\pi^\ddagger$,
and let $\mathcal{FL}_{\mathfrak V}^o$ be the map defined by 
$\mathcal{FL}_{\mathfrak V}=\jmath\circ \mathcal{FL}_{\mathfrak V}^o$ and 
$\bar{\pi}_{\mathcal{P}}\colon\mathcal{P}\to M$ 
the natural projection.
In order to assure the existence of the Hamiltonian formalism 
it is needed that the Lagrangian density $\mathcal{L}_{\mathfrak V}\in\Omega^4(J^2\pi)$ is, at least, 
{\sl almost-regular}; i.e, $\mathcal{P}$ is a closed submanifold
of $J^2\pi^\ddagger$, $\mathcal{FL}_{\mathfrak V}$ is a submersion onto its image
and, for every $j^3_x\phi \in J^3\pi$, the fibers 
$\mathcal{FL}_{\mathfrak V}^{-1}(\mathcal{FL}_{\mathfrak V}(j^3_x\phi))$
are connected submanifolds of $J^3\pi$. For more details in almost-regular Lagrangians and how to recover the Hamiltonian formalism from the unified Lagrangian-Hamiltonian formalism, we recommend consulting references \cite{pere1,art:Roman09}.

The proof that the Hilbert-Einstein Lagrangian is almost-regular is based on the fact that $\mathcal{P}$ is diffeomorphic to the first jet of the corresponding fiber bundle\cite{GasetGR}. This property is closely related to the fact that the Euler-Lagrange equations (Einstein's Field Equations) are second-order, although one expects fourth-order equations for a second-order Lagrangian. This topic is called order-reduction (or projectability) of a theory \cite{Rosado,rosado2,gaset_order_2016}. Horndenski Lagrangians are constructed such that the corresponding field equations are second-order, therefore, one hopes to proceed in a similar way as in the Hilbert-Einstein case. Nevertheless, the fact that the Euler-Lagrangians equations projects to lower order doesn't implies that the geometric structures also project to a lower order. The Horndenski theories that have this property are characterised by proposition \ref{prop:proyecta}. 

A form $\alpha\in \Omega^*(J^3\pi)$ projects to $J^s\pi$, $s=1,2$, if it is $\pi^3_s$-basic, that is, $\Lie_Y\alpha=0$ for all vector fields $Y$ vertical with respect to $\pi^3_s$. The multisymplectic Lagrangian system is $(J^3\pi,\Omega_{\mathcal{L}})$, where $\Omega_{\mathcal{L}_{\mathcal{V}}}= \widetilde{\mathcal{FL}_{\mathcal{V}}}^{\ *}\Omega_1^s$ is the Poincaré-Cartan form.

\begin{proposition}\label{prop:proyecta}
The Poincaré-Cartan form $\Omega_{\mathcal{L}_{\mathcal{V}}}$ of a cubic Horndeski Lagrangian projects to $J^1\pi$ if, and only if,
$$\frac{\partial G_3(\phi,X)}{\partial X}=0\,.$$
\end{proposition}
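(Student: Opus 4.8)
The plan is to translate the projectability of the Poincaré–Cartan form $\Omega_{\mathcal{L}_{\mathfrak V}} = \widetilde{\mathcal{FL}}_{\mathfrak V}^{\ *}\Omega_1^s$ onto $J^1\pi$ into a concrete condition on the momenta appearing in the extended Legendre map. Since a form projects onto $J^1\pi$ precisely when it is $\pi^3_1$-basic, I would test $\Omega_{\mathcal{L}_{\mathfrak V}}$ against the vertical vector fields spanning $\ker(\pi^3_1)_*$, namely $\partial/\partial g_{\alpha\beta,\mu\nu}$, $\partial/\partial \phi_{;\mu\nu}$, $\partial/\partial g_{\alpha\beta,\mu\nu\lambda}$ and $\partial/\partial \phi_{;\mu\nu\lambda}$. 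From the local expression of $\Theta_1^s$ pulled back by $\widetilde{\mathcal{FL}}_{\mathfrak V}$, the potentially non-basic pieces are exactly those built from the highest momenta $\widetilde{\mathcal{FL}}_{\mathfrak V}^{\ *}p_g^{\alpha\beta,\mu\nu}=\partial\hat L/\partial g_{\alpha\beta,\mu\nu}$ and $\widetilde{\mathcal{FL}}_{\mathfrak V}^{\ *}p_\phi^{,\mu\nu}=\partial\hat L/\partial \phi_{;\mu\nu}$. Hence the first reduction is to show that $\Omega_{\mathcal{L}_{\mathfrak V}}$ is $\pi^3_1$-basic if and only if these two functions are themselves $\pi^2_1$-basic, i.e.\ depend only on the first-order data $(x^\mu,g_{\alpha\beta},\phi,g_{\alpha\beta,\mu},\phi_{;\mu})$ and not on the accelerations.

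Next I would compute these two momenta explicitly for the cubic Lagrangian $L_{\mathfrak V}=\frac{1}{16\pi G}\sqrt{|g|}(L_1+L_2+L_3)$ with $L_1=R+X$, $L_2=G_2(\phi,X)$, $L_3=G_3(\phi,X)\square\phi$. The terms $L_1$ and $L_2$ contribute nothing to the $g_{\alpha\beta,\mu\nu}$-dependence of the momenta except through $R$, whose second-derivative part is linear in $g_{\alpha\beta,\mu\nu}$ with coefficients depending only on the metric; crucially, in the covariant chart this linear piece is the same universal structure as in the Hilbert–Einstein case, which is known to project. The decisive contribution comes from $L_3$: since $\square\phi=g^{\mu\nu}\phi_{;\mu\nu}$, the function $\partial\hat L/\partial\phi_{;\mu\nu}$ acquires a term proportional to $G_3\,g^{\mu\nu}$ (harmless, first-order) together with a term proportional to $\frac{\partial G_3}{\partial X}\,\square\phi\,(\text{first-order})$ and, more importantly, $\frac{\partial G_3}{\partial X}$ multiplying $\phi_{;\rho}\phi_{;\sigma}$-type factors contracted against $\phi_{;\mu\nu}$; the genuine obstruction is the appearance of the accelerations $\phi_{;\mu\nu}$ themselves inside $X$-derivatives, since $X=-\frac12 g^{\mu\nu}\phi_{;\mu}\phi_{;\nu}$ is first-order but $\square\phi$ is not. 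I would isolate the coefficient of $\partial/\partial\phi_{;\rho\sigma}$ after the derivative and show it is proportional to $\frac{\partial G_3}{\partial X}$.

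The core of the argument is therefore the computation of $\partial/\partial\phi_{;\rho\sigma}$ acting on $\partial\hat L/\partial\phi_{;\mu\nu}$ and on $\partial\hat L/\partial g_{\alpha\beta,\mu\nu}$, checking that every acceleration-dependent term is a multiple of $\frac{\partial G_3}{\partial X}$. Concretely, differentiating the $L_3$ contribution twice with respect to the second-order field data produces a tensor whose only surviving acceleration dependence factors through $\frac{\partial G_3}{\partial X}\,g^{\mu\nu}g^{\rho\sigma}$-type combinations; when $\frac{\partial G_3}{\partial X}=0$ these vanish and the momenta become first-order, so the relevant Lie derivatives of $\Theta_1^s$ (and hence of $\Omega_{\mathcal{L}_{\mathfrak V}}=-d\,\widetilde{\mathcal{FL}}_{\mathfrak V}^{\ *}\Theta_1^s$) vanish, giving projectability. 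Conversely, if $\frac{\partial G_3}{\partial X}\neq 0$ on an open set, the same coefficient is non-zero, so $\partial\hat L/\partial\phi_{;\mu\nu}$ genuinely depends on $\phi_{;\rho\sigma}$ and the interior product of $\Omega_{\mathcal{L}_{\mathfrak V}}$ with $\partial/\partial\phi_{;\rho\sigma}$ is non-zero, obstructing the projection.

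I expect the main obstacle to be the $R$-term bookkeeping: one must verify that the second-order metric dependence coming from the Ricci scalar does \emph{not} obstruct projection on its own, i.e.\ that it reproduces exactly the Einstein–Hilbert structure whose projectability is established in \cite{GasetGR}, so that the \emph{only} net obstruction is the $G_3$ piece. This requires carefully tracking the symmetrisations and the combinatorial factors $n(\mu\nu)$ in the covariant chart, and confirming that the cross-terms between the $G_3$-sector and the gravitational sector do not introduce spurious acceleration dependence. Once that cancellation is in place, the equivalence with $\frac{\partial G_3}{\partial X}=0$ follows directly from the vanishing (or non-vanishing) of a single coefficient tensor.
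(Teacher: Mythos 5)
Your central reduction --- that $\Omega_{\mathcal{L}_{\mathfrak V}}$ is $\pi^3_1$-basic if and only if the highest momenta $\partial\hat L/\partial g_{\alpha\beta,\mu\nu}$ and $\partial\hat L/\partial \phi_{;\mu\nu}$ are $\pi^2_1$-basic --- is false, and for this Lagrangian it collapses the proof entirely. Since $X=-\tfrac12 g^{\mu\nu}\phi_{;\mu}\phi_{;\nu}$ involves only first derivatives, the cubic Horndeski Lagrangian is \emph{affine} in the accelerations for every choice of $G_3$: one gets exactly $\partial\hat L/\partial\phi_{;\mu\nu}=\sqrt{-g}\,G_3(\phi,X)\,g^{\mu\nu}$ (there is no $\partial G_3/\partial X$ term here, contrary to what you write, because $X$ does not depend on $\phi_{;\mu\nu}$), and $\partial\hat L/\partial g_{\alpha\beta,\mu\nu}$ depends only on the metric. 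Both highest momenta are therefore always functions on $J^1\pi$, and the second derivatives $\partial^2\hat L/\partial\phi_{;\rho\sigma}\partial\phi_{;\mu\nu}$ you propose to compute vanish identically. Your criterion would thus declare the Poincar\'e--Cartan form projectable for \emph{every} $G_3$, contradicting the proposition.

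The genuine obstruction sits in the coefficients you discarded: the first-order momenta $\hat L_g^{\alpha\beta,\mu}$ and $\hat L_\phi^{,\mu}$ multiplying $\d g_{\alpha\beta}\wedge\d^3x_\mu$ and $\d\phi\wedge\d^3x_\mu$ in $\widetilde{\mathcal{FL}}_{\mathfrak V}^{\ *}\Theta_1^s$. These contain $\partial\hat L/\partial\phi_{;\mu}$ (which picks up $-\sqrt{-g}\,\tfrac{\partial G_3}{\partial X}g^{\mu\nu}\phi_{;\nu}\,\square\phi$, manifestly acceleration-dependent) and the total derivatives $X_\nu\bigl(\partial\hat L/\partial\phi_{;\mu\nu}\bigr)$, which raise the order by one and contribute $\sqrt{-g}\,g^{\mu\nu}\tfrac{\partial G_3}{\partial X}X_\nu(X)$. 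Whether the residual acceleration dependence cancels is precisely the content of the second family of Rosado--Mu\~noz-Masqu\'e conditions \eqref{condicionesrosado} on the velocity-derivatives of the affine coefficients $L^{ij}_\alpha$, which is what the paper actually verifies: affinity holds automatically, and \eqref{condicionesrosado} reduces to $\sqrt{-g}\,\phi_{,\delta}\tfrac{\partial G_3}{\partial X}\bigl(-2g^{\mu\nu}g^{\gamma\delta}+g^{\gamma\nu}g^{\mu\delta}+g^{\gamma\mu}g^{\nu\delta}\bigr)=0$, i.e.\ $\partial G_3/\partial X=0$. To repair your argument you would need to replace the test on the second-order momenta by a test on the first-order momenta (or equivalently invoke the full Rosado--Mu\~noz-Masqu\'e criterion); as written, the key step of the proof is missing.
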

\begin{proof}

The necessary and sufficient conditions for the associated Poincaré-Cartan form of a second-order theory to project on $J^1\pi$, according to \cite{Rosado} and \cite{rosado2}, are that $L\in C^\infty(J^2\pi)$ is an affine function with respect to the affine structure of $p_1^2:J^2\pi\rightarrow J^1\pi$, i.e.,

\begin{equation}
    L=L^{ij}_\alpha y^{\alpha}_{(ij)}+L_0, L^{ij}_\alpha=L^{ji}_\alpha\in C^{\infty}(J^1\pi), L_0\in C^{\infty}(J^1\pi),
\end{equation}

and the following equations hold:

\begin{equation}\label{condicionesrosado}
    2\frac{\partial L_{\beta}^{hi}}{\partial y_{a}^{\alpha}}-\frac{\partial L_{\alpha}^{ai}}{\partial y_{h}^{\beta}}-\frac{\partial L_{\alpha}^{ah}}{\partial y_{i}^{\beta}}=0, a,h,i=1,...n, \alpha,\beta=1,...,m,
\end{equation}

which in cubic Horndeski's theory translate into 

\begin{equation}
    L=\sum_{\alpha\leq\beta}L_g^{\alpha\beta,\mu\nu}g_{\alpha\beta,\mu\nu}+L_\phi^{,\mu\nu}\phi_{,\mu\nu}+L_0,
\end{equation}

where

\begin{align}
    L_0=&\sqrt{-g}\Bigg\{g^{\alpha\beta}\Big[g^{\gamma\delta}(g_{\delta\mu,\beta}\Gamma^{\mu}_{\alpha\gamma}-g_{\delta\mu,\gamma}\Gamma^{\mu}_{\alpha\beta})+\Gamma^{\delta}_{\alpha\beta}\Gamma^{\gamma}_{\gamma\delta}-\Gamma^{\delta}_{\alpha\gamma}\Gamma^{\gamma}_{\beta\delta}\Big]\nonumber\\
    &+X+G_2(\phi,X)-\phi_{,\gamma}\Gamma^{\gamma}_{\nu\mu}g^{\mu\nu}G_3(\phi,X)\Bigg\},
\end{align}

\begin{equation}
    L_g^{\alpha\beta,\mu\nu}=\frac{n(\alpha\beta)}{2}\sqrt{-g}\left(g^{\alpha\mu}g^{\beta\nu}+g^{\alpha\nu}g^{\beta\mu}-2g^{\alpha\beta}g^{\mu\nu}\right),
\end{equation}

\begin{equation}
    L_{\phi}^{,\mu\nu}=\sqrt{-g}g^{\mu\nu}G_3(\phi,X),
\end{equation}

and the equations (\ref{condicionesrosado}) hold if, and only, if

\begin{equation}\label{Constraintrosado}
    \sqrt{-g}\phi_{,\delta}\frac{\partial G_3(\phi,X)}{\partial X}\left(-2g^{\mu\nu}g^{\gamma\delta}+g^{\gamma\nu}g^{\mu\delta}+g^{\gamma\mu}g^{\nu\delta}\right)=0.
\end{equation}

Equation \ref{Constraintrosado} only holds if $\frac{\partial G_3(\phi,X)}{\partial X}=0$. 
\end{proof}

If $\frac{\partial G_3}{\partial X}=0$, then we should expect that the systems behaves like a first-order system. We will study this particular case first, and then we present the general case.

\subsection{Hamiltonian formalism for a particular case}

Throughout this subsection we shall consider that $\frac{\partial G_3(\phi,X)}{\partial X}=0$, i.e. $G_3(\phi,X)=G_3(\phi)$. Hence

\begin{equation}
L_{\mathfrak V}=\frac{1}{16\pi G}\sqrt{|g|}\left[R+X+G_2(\phi,X)+G_3(\phi)\square\phi\right].
\end{equation}

\begin{proposition}
\label{prop:AlmReg}
$L_{\mathfrak V}$ is an almost-regular Lagrangian
and
$\mathcal{P}$ is diffeomorphic to $J^1\pi$.
\label{1stdifeo}
\end{proposition}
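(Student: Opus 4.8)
The strategy is to exploit Proposition~\ref{prop:proyecta}: since $\partial G_3/\partial X=0$ the Lagrangian is affine in the accelerations, so the associated Legendre structure should descend to $J^1\pi$. Concretely, the plan is to show that the restricted Legendre map $\Leg_{\mathfrak V}$ of Proposition~\ref{prop:GraphLegMapSect} factors through the projection $\pi^3_1\colon J^3\pi\to J^1\pi$ as a \emph{section} of $\pi^\ddagger_{J^1\pi}\colon J^2\pi^\ddagger\to J^1\pi$, and then to read off both almost-regularity and the diffeomorphism $\P\cong J^1\pi$ directly from this factorisation.

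First I would verify that each of the four families of component functions of $\Leg_{\mathfrak V}$ is $\pi^3_1$-projectable, i.e. depends only on the coordinates $(x^\mu,g_{\alpha\beta},\phi,g_{\alpha\beta,\mu},\phi_{;\mu})$ of $J^1\pi$. For the second-order momenta this is immediate: with $G_3=G_3(\phi)$ one has $\partial\hat L/\partial g_{\alpha\beta,\mu\nu}=L_g^{\alpha\beta,\mu\nu}(g)$ and $\partial\hat L/\partial\phi_{;\mu\nu}=L_\phi^{,\mu\nu}(g,\phi)$, which are functions on $E$. For the first-order momenta $\hat L_g^{\alpha\beta,\mu}$ and $\hat L_\phi^{,\mu}$ the only terms that could raise the order are the total derivatives $X_\nu(\partial\hat L/\partial g_{\alpha\beta,\mu\nu})$ and $X_\nu(\partial\hat L/\partial\phi_{;\mu\nu})$ appearing in Proposition~\ref{prop:GraphLegMapSect}; but since the differentiated quantities are functions on $E$, their $X_\nu$-derivatives live on $J^1\pi$, and inspecting the explicit expressions for $\hat L_\phi^{,\mu}$ and $\hat L_g^{\alpha\beta,\mu}$ (with the $\partial G_3/\partial X$-terms now absent) confirms that no second derivatives survive. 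This is the step I expect to require the most care, as it is exactly where the hypothesis $\partial G_3/\partial X=0$ enters.

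Granting this, $\Leg_{\mathfrak V}=\overline{\Leg}_{\mathfrak V}\circ\pi^3_1$ for a well-defined $\overline{\Leg}_{\mathfrak V}\colon J^1\pi\to J^2\pi^\ddagger$. Because $\Leg_{\mathfrak V}$ is a bundle morphism over $J^1\pi$ and $\pi^3_1$ is surjective, one obtains $\pi^\ddagger_{J^1\pi}\circ\overline{\Leg}_{\mathfrak V}=\mathrm{id}_{J^1\pi}$, so $\overline{\Leg}_{\mathfrak V}$ is a smooth section; in coordinates it is the graph sending $(x^\mu,g_{\alpha\beta},\phi,g_{\alpha\beta,\mu},\phi_{;\mu})$ to the same tuple together with the momenta fixed to the four projectable functions above. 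A section of a fibre bundle is an embedding with closed image (uniqueness of limits together with $\pi^\ddagger_{J^1\pi}\circ\overline{\Leg}_{\mathfrak V}=\mathrm{id}$ yields continuity of the inverse), so $\P=\Leg_{\mathfrak V}(J^3\pi)=\overline{\Leg}_{\mathfrak V}(J^1\pi)$ is a closed embedded submanifold of $J^2\pi^\ddagger$ diffeomorphic to $J^1\pi$, with $\overline{\Leg}_{\mathfrak V}$ realising the diffeomorphism. This settles the second assertion.

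Finally I would deduce almost-regularity from the factorisation. Closedness of $\P$ is already in hand. Viewing $\Leg_{\mathfrak V}^o=\overline{\Leg}_{\mathfrak V}\circ\pi^3_1$ as a map onto $\P$, it is the composite of the surjective submersion $\pi^3_1$ with the diffeomorphism $\overline{\Leg}_{\mathfrak V}\colon J^1\pi\to\P$, hence a submersion onto $\P$. For connectedness of the fibres, injectivity of $\overline{\Leg}_{\mathfrak V}$ forces $\Leg_{\mathfrak V}^{-1}(\Leg_{\mathfrak V}(j^3_x\phi))=(\pi^3_1)^{-1}(\pi^3_1(j^3_x\phi))$; these are the fibres of $\pi^3_1$, which are affine (modelled on the spaces of symmetric tensors realising the passage from $1$-jets to $3$-jets) and therefore connected. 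The three defining conditions of almost-regularity thus all hold, completing the argument.
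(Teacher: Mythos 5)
Your proof is correct and follows essentially the same route as the paper's: both rest on the observation that, when $\partial G_3/\partial X=0$, every momentum component of $\Leg_{\mathfrak V}$ depends only on first-order data, so the Legendre map descends through $\pi^3_1$ to a section of $\pi^{\ddagger}_{J^1\pi}$ whose image is the closed submanifold $\mathcal{P}$, the fibres of $\Leg_{\mathfrak V}$ being exactly the connected fibres of $\pi^3_1$. The paper packages this infinitesimally --- a rank count for the submersion property together with the computation of $\ker(\Leg_{\mathfrak V})_*$ --- and realises the diffeomorphism as $\Leg_{\mathfrak V}\circ\phi$ for a local section $\phi$ of $\pi^3_1$, which is precisely your factored map $\overline{\Leg}_{\mathfrak V}$.
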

\begin{proof}
$\mathcal{P}$ is a closed submanifold of $J^2\pi^\ddagger$ 
since it is defined by the constraints
\begin{align*}
p_g^{\alpha\beta,\mu\nu}-\frac{\partial \hat L}{\partial g_{\alpha\beta,\mu\nu}}&=0;\quad 
p_g^{\alpha\beta,\mu}-\hat L_g^{\alpha\beta,\mu}=0\ ,\\
p_{\phi}^{,\mu\nu}-\frac{\partial \hat L}{\partial \phi_{;\mu\nu}}&=0;\quad \quad \quad 
p_{\phi}^{,\mu}-\hat L_{\phi}^{,\mu}=0\ .
\end{align*}

The dimension of $\mathcal{P}$ is $4+10+1+40+4=59$ and,
as ${\rm rank}(\Tan\mathcal{FL}_{\mathfrak V})=59$ in every point, 
$\Tan\mathcal{FL}_{\mathfrak V}$ is surjective and $\mathcal{FL}_{\mathfrak V}$ is a submersion. Moreover,
\beq
\ker\,(\Leg_{{\mathfrak V}})_*={\left<\frac{\partial}{\partial g_{\alpha\beta,\mu\nu}},\frac{\partial}{\partial g_{\alpha\beta,\mu\nu\lambda}},\frac{\partial}{\partial\phi_{;\mu\nu\lambda}}\right>}_{0\leq\alpha\leq\beta\leq3;\, 0\leq\mu\leq\nu\leq\lambda\leq3}\ ,
\label{kerfl}
\eeq
therefore, the fibers of the Legendre map
are the fibers of the projection $\bar\pi^3_1$. As we consider metric with fixed signature, they are connected submanifolds of $J^3\pi$. 

Taking any local section $\phi$ of the projection $\pi^3_1$,
the map $\Phi=\Leg_{\mathfrak V}\circ\phi\colon J^1\pi\to{\cal P}$
is a local diffeomorphism and it does not depend on the chosen section.
Therefore, ${\cal P}$ and $J^1\pi$ are diffeomorphic.
$$
\xymatrix{
 \ J^3\pi \ar@/_0pc/[rr]^{\Leg_{\mathfrak V}} \ar@/_0pc/[ddr]^{\pi^3_1} 
\ & \ & \ 
\mathcal{P}\subset J^2\pi^\ddagger 
 \\  \ & \ & \ & \ &  \\
& \ J^1\pi \  \ar@/^1pc/[uul]^{\phi} \ar@/_0pc/[uur]_{\Phi}& 
}
$$
\end{proof}

Then, the $\mu$-transverse submanifolds
$\widetilde{\cal P}$ and ${\cal P}$ are diffeomorphic
and the diffeomorphism, denoted
 $\tilde\mu\colon\tilde{\cal P}\to{\cal P}$,
is just the restriction of the projection $\mu$ to $\tilde{\cal P}$.
Therefore we can define a
{\sl Hamiltonian $\mu$-section} as $h_{\mathfrak V} = \tilde{\jmath} \circ \widetilde{\mu}^{-1}$, 
which  is specified by a local Hamiltonian function
$H_{\mathcal{P}} \in C^\infty(\mathcal{P})$; that is,
$$h_{\mathfrak V}(x^\mu,g_{\alpha\beta},\phi,g_{\alpha\beta,\mu},\phi_{\mu},p_g^{\alpha\beta,\mu},p_{\phi}^{,\mu},p_g^{\alpha\beta,\mu\nu},p_{\phi}^{,\mu\nu})=(x^\mu,g_{\alpha\beta},\phi,g_{\alpha\beta,\mu},\phi_{\mu},-H_{\mathcal{P}},p_g^{\alpha\beta,\mu},p_{\phi}^{,\mu},p_g^{\alpha\beta,\mu\nu},p_{\phi}^{,\mu\nu}).$$

This function $H_{\mathfrak V}$ is the Hamiltonian function defined on 
$\mathcal{P}$ and is given by
$H=(\Leg_{\mathfrak V}^o)^*\,H_{\mathfrak V}$;
where $H$, which is $\Leg_{\mathfrak V}^o$-projectable, is 

\begin{align}
    H=\sum_{\alpha\leq\beta}L_g^{\alpha\beta,\mu\nu}g_{\alpha\beta,\mu\nu}+L_{\phi}^{,\mu\nu}\left(\phi_{;\mu\nu}+\phi_{;\gamma}\Gamma^{\gamma}_{\mu\nu}\right)+\sum_{\alpha\leq\beta}L_g^{\alpha\beta,\mu}g_{\alpha\beta,\mu}+L_{\phi}^{,\mu}\phi_{;\mu}-L,
\end{align}

This is summarised in the following diagram:

$$
\xymatrix{
\widetilde{\mathcal{P}} \ar[rr]^-{\tilde{\jmath}} \ar[d]^-{\tilde{\mu}} & \ & J^{2}\pi^\dagger \ar[d]^-{\mu} & \ & \mathcal{W} \ar[d]_-{\mu_\mathcal{W}} \ar[ll]_-{\rho_2} \\
\mathcal{P} \ar[rr]^-{\jmath} \ar[urr]^-{h_{\mathfrak V}}& \ & J^{2}\pi^\ddagger & \ & \mathcal{W}_r \ar@/_0.7pc/[u]_{\hat{h}} \ar[ll]_-{\rho_2^r}
}
$$

Now, it is possible to define the Hamiltonian forms

$$
\Theta_{h_{\mathfrak V}}:=h_{\mathfrak V}^*\Theta_1^s \in\df^4({\cal P}) \quad , \quad
\Omega_{h_{\mathfrak V}}:=-\d\Theta_{h_{\mathfrak V}}=h_{\mathfrak V}^*\Omega_1^s \in\df^{5}({\cal P}) \ ,
$$
and thus we have the Hamiltonian system $({\cal P},\Omega_{h_{\mathfrak V}})$.
Then, the \textsl{Hamiltonian problem} associated with this system
consists in finding holonomic sections $\psi_h\colon M\rightarrow\mathcal{P}$ 
satisfying any of the following equivalent conditions:
\begin{enumerate}
\item $\psi_h$ is a solution to the equation
\beq{}\label{eq:hsec}
\psi_h^*i(X)\Omega_{h_{\mathfrak V}}=0 \quad, \quad 
\mbox{\rm for every $X \in\mathfrak{X}(\mathcal{P})$}\ .
\eeq
\item $\psi_h$ is an integral section of a multivector field contained 
in a class of holonomic multivector fields 
$\{{\bf X}_h\}\subset\mathfrak{X}^4(\mathcal{P})$
satisfying the equation
\beq{}\label{eq:hvf}
\inn{({\bf X}_h)}\Omega_{h_{\mathfrak V}}=0 \quad ,\quad
\forall {\bf X}_h\in\{{\bf X}_h\}\subset\mathfrak{X}^4(\mathcal{P}) \ .
\eeq
\end{enumerate}
(Here, holonomic sections and multivector fields are defined as in $J^2\pi^\dagger$).
The solutions of the Hamiltonian formalism can be recovered geometrically from the unified formalism using the adequate projections (see \cite{pere2} for more details). Nevertheless, we will continue by presenting the local expression of equations \ref{eq:hvf}.

{\bf Formulation using multimomentum coordinates.}

The natural coordinates of $J^2\pi^\ddagger$ are 
\begin{equation*}
   (x^\mu,g_{\alpha\beta},\phi,g_{\alpha\beta,\mu},\phi_{;\mu},p_g^{\alpha\beta,\mu},p_{\phi}^{,\mu},p_g^{\alpha\beta,\mu\nu},p_{\phi}^{,\mu\nu}), 
\end{equation*}

which contain the multimomenta and the velocities. These are the expected coordinates for a Hamiltonian formulation of a second-order regular Lagrangian. Nevertheless, our Lagrangian is singular and the Hamiltonian formulation takes place in the submanifold $\mathcal{P}$. Since it is diffeomorphic to $J^1\pi$ by proposition \ref{prop:AlmReg}, a natural set of coordinates is
\begin{equation*}
   (x^\mu,g_{\alpha\beta},\phi,g_{\alpha\beta,\mu},\phi_{,\mu})\,.
\end{equation*}
This is an uninteresting coordinate system, as the resulting equations are identical than the Lagrangian ones. It is customary to write the Hamiltonian in terms of the positions and multimomenta only, so we need to isolate the velocities to be able to write the Hamiltonian in these terms. The relation between momenta and velocities is given by

\begin{align}
    p_g^{\alpha\beta,\mu}=&-\frac{1}{2}\sqrt{-g}\Big\{G_3(\phi)\phi_{;\gamma}\left(g^{\gamma\beta}g^{\alpha\mu}+g^{\gamma\alpha}g^{\beta\mu}-g^{\gamma\mu}g^{\alpha\beta}\right)\nonumber\\
    &-g_{\rho\lambda,\sigma}\Big[-3g^{\rho\sigma}g^{\mu(\alpha}g^{\beta) \lambda}+2g^{\rho\lambda}g^{\mu(\alpha}g^{\beta)\sigma}+2g^{\alpha\beta}g^{\rho\sigma}g^{\mu \lambda}\nonumber\\
    &+3g^{\mu \sigma}g^{\lambda(\alpha}g^{\beta)\rho}-2g^{\rho\mu}g^{\lambda(\alpha}g^{\beta)\sigma}-g^{\alpha\beta}g^{\rho\lambda}g^{\mu \sigma}\nonumber\\
    &-\frac{1}{2}g^{\rho\lambda}g^{\mu \sigma}g^{\alpha\beta}+g^{\mu(\rho}g^{\lambda)\sigma}g^{\alpha\beta}+g^{\mu \sigma}g^{\alpha(\rho}g^{\lambda)\beta}\Big]\Big\}\,,
\end{align}

\begin{align}
    p_{\phi}^{,\mu}=&-\sqrt{-g}g^{\mu\nu}\Bigg\{\phi_{;\nu}\left[1+\frac{\partial G_2(\phi,X)}{\partial X}+\frac{\partial G_3(\phi)}{\partial \phi}\right]\Bigg\}
\end{align}

 Unlike in General Relativity, where the $p_g^{\alpha\beta,\mu}$ and $g_{\alpha\beta,\mu}$ are in one to one correspondence \cite{GasetGR}, this is not generally true even in the particular case where $\frac{\partial G_3}{\partial X}=0$. To isolate the velocities we would need to fully specify $G_2(\phi,X)$ and in some cases it would not even be possible to do so.

To illustrate this procedure, we will consider the case $\frac{\partial^2 G_2}{\partial X^2}=0$ and $1+\frac{\partial G_2(\phi,X)}{\partial X}+\frac{\partial G_3(\phi)}{\partial\phi}\neq0$. With this in mind, we isolate the velocities in terms of the positions and multimomenta only:

\begin{align}
    \phi_{;\nu}=&-\frac{1}{\sqrt{-g}}\left[\frac{p_{\phi}^{,\mu}g_{\mu\nu}}{1+\frac{\partial G_2(\phi,X)}{\partial X}+\frac{\partial G_3(\phi)}{\partial\phi}}\right]=U_{,\nu}\\
    g_{\alpha\beta,\mu}=&\frac{1}{3\sqrt{-g}}\frac{1}{n(\alpha\beta)}\Bigg\{p_g^{\lambda\sigma,\nu}-\frac{1}{2}p_{\phi}^{,\delta}g_{\delta\epsilon}\left[\frac{G_3(\phi)}{1+\frac{\partial G_2(\phi,X)}{\partial X}+\frac{\partial G_3(\phi)}{\partial \phi}}\right]\left[g^{\nu\lambda}g^{\epsilon\sigma}+g^{\nu\sigma}g^{\epsilon\lambda}-g^{\lambda\sigma}g^{\epsilon\nu}\right]\Bigg\}\nonumber\\
    &\Bigg\{-2g_{\alpha\lambda}g_{\beta\mu}g_{\sigma\nu}-2g_{\alpha\mu}g_{\beta\lambda}g_{\sigma\nu}+6g_{\alpha\lambda}g_{\beta\sigma}g_{\mu\nu}+g_{\alpha\nu}g_{\beta\mu}g_{\lambda\sigma}+g_{\alpha\mu}g_{\beta\nu}g_{\lambda\sigma}\Bigg\}=V_{\alpha\beta,\mu}
\end{align}

Notice that we require that if $
\frac{\partial G_2(\phi,X)}{\partial X}+\frac{\partial G_3(\phi)}{\partial\phi}=-1\,,
$
 then $p^{,\mu}_\phi=0$ and there is no hope to use $p^{,\mu}_\phi$ as a coordinate instead of $\phi;_\mu$.

Now we can set $(x^\mu,g_{\alpha\beta},\phi,p_g^{\alpha\beta,\mu},p_{\phi}^{,\mu})$ 
as coordinates of $\mathcal{P}$ and then rewrite the Hamiltonian function
$$
H_{\mathfrak V}(x^\mu,g_{\alpha\beta},\phi,p_g^{\alpha\beta,\mu},p_\phi^{,\mu})=
H_{\mathfrak V}(x^\mu,g_{\alpha\beta},\phi,V_{\alpha\beta,\mu}(p_g^{\alpha\beta,\mu},p_\phi^{,\mu},g_{\alpha\beta},\phi),U_{,\mu}(p_\phi^{,\mu},g_{\alpha\beta},\phi)) \ .
$$

The Hamiltonian function is hence
\begin{align*}
    H_{\mathfrak V}=&\sum_{\alpha\leq\beta}p_g^{\alpha\beta,\mu}V_{\alpha\beta,\mu}+\sqrt{-g}G_3(\phi)g^{\mu\nu}U_{;\gamma}\Gamma_{\nu\mu}^{\gamma}+p_{\phi}^{,\mu}U_{,\mu}-\sqrt{-g}\left(X+G_2(\phi,X)\right)\nonumber\\
    &-\sqrt{-g}g^{\rho\lambda}\left[-\frac{1}{2}g^{\sigma \epsilon}g^{\delta \zeta}g_{\epsilon\zeta,\sigma}\left(g_{\lambda\delta,\rho}+g_{\rho \delta,\lambda}-g_{\rho\lambda,\delta}\right)+\frac{1}{2}g^{\sigma \epsilon}g^{\delta\zeta}g_{\epsilon\zeta,\lambda}\left(g_{\rho \delta,\lambda}+g_{\sigma \delta,\rho}-g_{\rho\sigma,\delta}\right)+\Gamma_{\sigma \delta}^\sigma\Gamma_{\lambda\rho}^{\delta}-\Gamma_{\lambda\rho}^{\sigma}\Gamma_{\sigma\rho}^{\delta}\right]
\end{align*}

The field equations are derived again from \eqref{eq:hvf} expressed using the new coordinates.
Now, the Hamilton-Cartan form $\Omega_h$ has the local expression:
$$
\Omega_{h_{\mathfrak V}}=\d H_{\mathfrak V}\wedge \d^4x-\sum_{\alpha\leq\beta}\d p_g^{\alpha\beta,\mu}\wedge\d g_{\alpha\beta}\wedge\d^3x_\mu-\d p_{\phi}^{,\mu}\wedge \d \phi\wedge \d^3x_{\mu}-\sum_{\alpha\leq\beta}\d L_g^{\alpha\beta,\mu\nu}\wedge\d V_{\alpha\beta,\mu}\wedge\d^3x_\nu - \d L_{\phi}^{,\mu\nu}\wedge \d U_{,\mu}\wedge \d^3x_{\nu}\ ,
$$
and the local expression of a representative of a class $\{{\bf X}_h\}$ of 
semi-holonomic multivector fields in ${\cal P}$ is
$$
{\bf X}_h=\bigwedge_{i=\nu}^4\left(\frac{\partial}{\partial x^\nu}+F_{g\;\alpha\beta,\nu}\frac{\partial}{\partial g_{\alpha\beta}}+F_{\phi\;,\nu}\frac{\partial}{\partial \phi}+G^{\alpha\beta,\mu}_{g\;\nu}\frac{\partial}{\partial p_g^{\alpha\beta,\mu}}+G^{,\mu}_{\phi\;\nu}\frac{\partial}{\partial p_{\phi}^{,\mu}}\right);.
$$
with $F_{g\;\alpha\beta,\nu}$, $G^{\alpha\beta,\mu}_{g\;\nu}$, $F_{\phi\;,\nu}$,  $G^{,\mu}_{\phi\;\nu}\in C^\infty(\mathcal{P})$.

From \eqref{eq:hvf} we obtain
\begin{align}\label{eq:hamiltonparticularcase1}
\frac{\partial H_{\mathfrak V}}{\partial g_{\alpha\beta}}=&-G ^{\alpha\beta,\mu}_{g\;\mu}+G^{\rho\lambda,\gamma}_{g\;\nu}\frac{\partial V_{\delta\zeta,\epsilon}}{\partial p_{g}^{\rho\lambda,\gamma}}\frac{\partial L^{\delta\zeta,\epsilon\nu}}{\partial g_{\alpha\beta}}+F_{g\;\rho\lambda,\nu}\left(\frac{\partial V_{\delta\zeta,\epsilon}}{\partial g_{\rho\lambda}}\frac{\partial L^{\delta\zeta,\epsilon\nu}}{\partial g_{\alpha\beta}}-\frac{\partial V_{\delta\zeta,\epsilon}}{\partial g_{\alpha\beta}}\frac{\partial L^{\delta\zeta,\epsilon\nu}}{\partial g_{\rho\lambda}}\right)
\nonumber\\
&+G_{\phi\;\rho}^{,\rho}\left(\frac{\partial U_{,\nu}}{\partial p_{\phi}^{,\mu}}\frac{\partial L_{\phi}^{,\mu\nu}}{\partial g_{\alpha\beta}}-\frac{\partial U_{,\nu}}{\partial g_{\alpha\beta}}\frac{\partial L_{\phi}^{,\mu\nu}}{\partial p_{\phi}^{,\mu}}\right)+F_{\phi\;,\mu}\left(\frac{\partial U_{,\nu}}{\partial \phi}\frac{\partial L_{\phi}^{,\mu\nu}}{\partial g_{\alpha\beta}}-\frac{\partial U_{,\nu}}{\partial g_{\alpha\beta}}\frac{\partial L_{\phi}^{,\mu\nu}}{\partial \phi}\right),
\\
\frac{\partial H_{\mathfrak V}}{\partial p^{\alpha\beta,\mu}}=&F_{g\;\alpha\beta,\mu}-F_{g\;\rho\lambda,\nu}\frac{\partial V_{ab,c}}{\partial p_g^{\alpha\beta,\mu}}\frac{\partial L_g^{ab,c\nu}}{\partial g_{\rho\lambda}},\\
\frac{\partial H_{\mathfrak V}}{\partial \phi}=&-G_{\phi\;\nu}^{,\rho}\left(\delta_{\rho}^{\nu}+\frac{\partial U_{,\mu}}{\partial \phi}\frac{\partial L_{\phi}^{,\mu\nu}}{\partial p_{\phi}^{,\rho}}-\frac{\partial U_{,\mu}}{\partial p_{\phi}^{,\rho}}\frac{\partial L_{\phi}^{,\mu\nu}}{\partial \phi}\right)\nonumber\\
&+F_{g\;\alpha\beta,\nu}\left(\frac{\partial V_{\rho\lambda,\mu}}{\partial \phi}\frac{\partial L_{g}^{\alpha\beta,\mu\nu}}{\partial g_{\rho\lambda}}+\frac{\partial U_{,\mu}}{\partial g_{\alpha\beta}}\frac{\partial L_{\phi}^{,\mu\nu}}{\partial \phi}-\frac{\partial U_{,\mu}}{\partial \phi}\frac{\partial L_{\phi}^{,\mu\nu}}{\partial g_{\alpha\beta}}\right),\\\label{eq:hamiltonparticularcase2}
\frac{\partial H_{\mathfrak V}}{\partial p_{\phi}^{,\mu}}=&F_{\phi\;,\mu}+F_{g \; \alpha\beta,\nu}\left(\frac{\partial U_{,\rho}}{\partial g_{\alpha\beta}}\frac{\partial L_{\phi}^{,\rho\nu}}{\partial p_{\phi}^{,\mu}}-\frac{\partial U_{,\rho}}{\partial p_{\phi}^{,\mu}}\frac{\partial L_{\phi}^{,\rho\nu}}{\partial g_{\alpha\beta}}-\frac{\partial V_{\delta\zeta,\epsilon}}{\partial p_{\phi}^{,\mu}}\frac{\partial L_{g}^{\delta\zeta,\epsilon\nu}}{\partial g_{\alpha\beta}}\right)\nonumber\\
&+F_{\phi\;,\nu}\left(\frac{\partial U_{,\rho}}{\partial \phi}\frac{\partial L_{\phi}^{,\rho\nu}}{\partial p_{\phi}^{,\mu}}-\frac{\partial U_{,\rho}}{\partial p_{\phi}^{\mu}}\frac{\partial L_{\phi}^{,\rho\nu}}{\partial \phi}\right).
\end{align}

Expresions (\ref{eq:hamiltonparticularcase1}) through (\ref{eq:hamiltonparticularcase2}) would be the classical Hamilton-De Donder-Weil equations
for a first order field theory except by the fact that they contain 
extra-terms because the cubic Horndeski Lagrangian is a  second order theory with respect to the metric and the scalar field, and neither
$\displaystyle L_g^{\alpha\beta,\mu\nu}=\frac{1}{n(\mu\nu)}\frac{\partial L}{\partial g_{\alpha\beta,\mu\nu}}$ nor $L_{\phi}^{,\mu\nu}=\frac{1}{n(\mu\nu)}\frac{\partial L}{\partial \phi_{;\mu\nu}}$
vanish.

\subsection{Hamiltonian formalism for the general case}

On this subsection we consider the full cubic Horndeski Lagrangian 

\begin{equation}
L_{\mathfrak V}=\frac{1}{16\pi G}\sqrt{|g|}\left[R+X+G_2(\phi,X)+G_3(\phi,X)\square\phi\right].
\end{equation}

We will assume that $L$ is almost-regular. The multimomenta for the general cubic case are

\begin{align}\label{siligadura}
    p_g^{\alpha\beta,\mu}=&-\frac{1}{2}\sqrt{-g}\Big\{G_3(\phi,X)\phi_{;\gamma}\left(g^{\gamma\beta}g^{\alpha\mu}+g^{\gamma\alpha}g^{\beta\mu}-g^{\gamma\mu}g^{\alpha\beta}\right)\nonumber\\
    &-g_{\rho\lambda,\sigma}\Big[-3g^{\rho\sigma}g^{\mu(\alpha}g^{\beta) \lambda}+2g^{\rho\lambda}g^{\mu(\alpha}g^{\beta)\sigma}+2g^{\alpha\beta}g^{\rho\sigma}g^{\mu \lambda}\nonumber\\
    &+3g^{\mu \sigma}g^{\lambda(\alpha}g^{\beta)\rho}-2g^{\rho\mu}g^{\lambda(\alpha}g^{\beta)\sigma}-g^{\alpha\beta}g^{\rho\lambda}g^{\mu \sigma}\nonumber\\
    &-\frac{1}{2}g^{\rho\lambda}g^{\mu \sigma}g^{\alpha\beta}+g^{\mu(\rho}g^{\lambda)\sigma}g^{\alpha\beta}+g^{\mu \sigma}g^{\alpha(\rho}g^{\lambda)\beta}\Big]\Big\}\,,
\end{align}

\begin{align}\label{noligadura}
    p_{\phi}^{,\mu}=&-\sqrt{-g}g^{\mu\nu}\Bigg\{\phi_{;\nu}\left[1+\frac{\partial G_2(\phi,X)}{\partial X}+\frac{\partial G_3(\phi,X)}{\partial \phi}+\square\phi\frac{\partial G_3(\phi,X)}{\partial X}\right]\nonumber\\
    &+\left(\phi_{;\alpha\nu}+\phi_{;\gamma}\Gamma^{\gamma}_{\alpha\nu}\right)g^{\alpha\beta}\phi_{;\beta}\frac{\partial G_3(\phi,X)}{\partial X}\Bigg\}=L_{\phi_h}^\mu.
\end{align}

The multimomentum (\ref{noligadura}) is not a constraint for the general case, in which $G_3\neq 0$. In contrast, (\ref{siligadura}) is indeed a constraint and we must demand tangency of the multivector field to the submanifold defined by this constraint. 

\begin{equation*}
   \Lie(X_{h\;\tau})\left(p_{\phi}^{,\mu}-L_{\phi_h}^\mu\right)\vert_{\W_c}=0,
\end{equation*}

which yields

\begin{align}\label{tangenciaconstraint}
0=&G_{g\;\tau}^{\alpha\beta,\mu}-\frac{1}{2}\sqrt{-g}\Bigg\{g^{\rho\lambda}F_{g\;\;\rho\lambda,\tau}N^{\alpha\beta\mu}-G_3(\phi,X)F_{g\;\rho\lambda,\tau}\phi_{;\gamma}\frac{\partial}{\partial g_{\rho\lambda}}\left(g^{\gamma\beta}g^{\alpha\mu}+g^{\gamma\alpha}g^{\beta\mu}-g^{\gamma\mu}g^{\alpha\beta}\right)\nonumber\\
&-F_{\phi,\tau}\frac{\partial G_3}{\partial \phi}\phi_{;\gamma}\left(g^{\gamma\beta}g^{\alpha\mu}+g^{\gamma\alpha}g^{\beta\mu}-g^{\gamma\mu}g^{\alpha\beta}\right)-g_{\rho\lambda,\sigma}F_{g\;\delta\zeta,\tau}\frac{\partial M^{\rho\lambda \sigma\alpha\beta\mu}}{\partial g_{\delta\zeta}}+F_{g\;\delta\zeta,\eta\tau}\delta_{\rho\lambda}^{\delta\zeta}\delta_{\sigma}^{\eta}M^{\rho\lambda \sigma\alpha\beta\mu}\nonumber\\
&-F_{\phi\;\rho,\tau}\delta_{\gamma}^{\;\rho}\left(g^{\gamma\beta}g^{\alpha\mu}+g^{\gamma\alpha}g^{\beta\mu}-g^{\gamma\mu}g^{\alpha\beta}\right)
\Bigg\},
\end{align}

where 

\begin{align*}
M^{\rho\lambda\sigma\alpha\beta\mu}=&-3g^{\rho\sigma}g^{\mu(\alpha}g^{\beta) \lambda}+2g^{\rho\lambda}g^{\mu(\alpha}g^{\beta)\sigma}+2g^{\alpha\beta}g^{\rho\sigma}g^{\mu\lambda}\\
     &+3g^{\mu \sigma}g^{\lambda(\alpha}g^{\beta)\rho}-2g^{\rho\mu}g^{\lambda(\alpha}g^{\beta)\sigma}-g^{\alpha\beta}g^{\rho\lambda}g^{\mu \sigma}\\
    &-\frac{1}{2}g^{\rho\lambda}g^{\mu \sigma}g^{\alpha\beta}+g^{\mu(\rho}g^{\lambda)\sigma}g^{\alpha\beta}+g^{\mu \sigma}g^{\alpha(\rho}g^{\lambda)\beta}
\end{align*}

and

\begin{equation*}
N^{\alpha\beta\mu}=G_3(\phi,X)\phi_{;\gamma}\left(g^{\gamma\beta}g^{\alpha\mu}+g^{\gamma\alpha}g^{\beta\mu}-g^{\gamma\mu}g^{\alpha\beta}\right)-g_{\rho\lambda,\sigma}M^{\rho\lambda\sigma\alpha\beta\mu}.
\end{equation*}

The second-order multimomenta $p_g^{\alpha\beta,\mu\nu}$ and $p_{\phi}^{,\mu\nu}$ are completely determined by the constrains defining $\mathcal{P}$ \ref{prop:GraphLegMapSect}. It is not possible, in general, to isolate the velocities in terms of the momenta unless $G_2(\phi,X)$ and $G_3(\phi,X)$ are explicitly specified. Moreover, the momenta depend on the acceleration of the scalar field, i.e. $\mathcal{P}$ does not project on $J^1\pi$ as expected \ref{prop:proyecta}. It is impossible, in general, to explicitly isolate the velocities purely in terms of the momenta, so we will use the mix coordinates $   (x^\mu,g_{\alpha\beta},\phi,g_{\alpha\beta,\mu},\phi_{;\mu},p_g^{\alpha\beta,\mu},p_{\phi}^{,\mu},p_g^{\alpha\beta,\mu\nu},p_{\phi}^{,\mu\nu})$ for the Hamiltonian formulation, which contains a positions, momenta and velocities. 

In terms of these coordinates, the Hamiltonian function is

\begin{align}
     H_{\mathfrak V}=&\sum_{\alpha\leq\beta}p_g^{\alpha\beta,\mu}g_{\alpha\beta,\mu}+\sqrt{-g}G_3(\phi,X)g^{\mu\nu}\phi_{;\gamma}\Gamma_{\nu\mu}^{\gamma}+p_{\phi}^{,\mu}\phi_{;\mu}-\sqrt{-g}\left(X+G_2(\phi,X)\right)\nonumber\\
    &-\sqrt{-g}g^{ab}\left[-\frac{1}{2}g^{ce}g^{df}g_{ef,c}\left(g_{bd,a}+g_{ad,b}-g_{ab,d}\right)+\frac{1}{2}g^{ce}g^{df}g_{ef,b}\left(g_{ad,c}+g_{cd,a}-g_{ac,d}\right)+\Gamma_{cd}^c\Gamma_{ba}^{d}-\Gamma_{ba}^{c}\Gamma_{ca}^{d}\right]
\end{align}

The Hamilton-Cartan form $\Omega_h$ has the local expression:
$$
\Omega_{h_{\mathfrak V}}=\d H_{\mathfrak V}\wedge \d^4x-\sum_{\alpha\leq\beta}\d p_g^{\alpha\beta,\mu}\wedge\d g_{\alpha\beta}\wedge\d^3x_\mu-\d p_{\phi}^{,\mu}\wedge \d \phi\wedge \d^3x_{\mu}-\sum_{\alpha\leq\beta}\d L_g^{\alpha\beta,\mu\nu}\wedge\d g_{\alpha\beta,\mu}\wedge\d^3x_\nu - \d L_{\phi}^{,\mu\nu}\wedge \d\phi_{;\mu}\wedge \d^3x_{\nu}\ ,
$$
and the local expression of a representative of a class $\{{\bf X}_h\}$ of 
semi-holonomic multivector fields in ${\cal P}$ is
$$
{\bf X}_h=\bigwedge_{i=\nu}^4\left(\frac{\partial}{\partial x^\nu}+F_{g\;\alpha\beta,\nu}\frac{\partial}{\partial g_{\alpha\beta}}+F_{\phi\;,\nu}\frac{\partial}{\partial \phi}+F_{g\;\alpha\beta,\mu,\nu}\frac{\partial}{\partial g_{\alpha\beta,\mu}}+F_{\phi\;\mu,\nu}\frac{\partial}{\partial \phi_{;\mu}}+G^{\alpha\beta,\mu}_{g\;\nu}\frac{\partial}{\partial p_g^{\alpha\beta,\mu}}+G^{,\mu}_{\phi\;\nu}\frac{\partial}{\partial p_{\phi}^{,\mu}}\right),
$$
with $F_{g\;\alpha\beta,\nu}$, $G^{\alpha\beta,\mu}_{g\;\nu}$,$F_{g\;\alpha\beta,\mu,\nu}$, $F_{\phi\;,\nu}$, $F_{\phi\;,\mu,\nu}$,$G^{,\mu}_{\phi\;\nu}$ $\in C^\infty(\mathcal{P})$.

From \eqref{eq:hvf} we get 

\begin{align}
\frac{\partial H_{\mathfrak V}}{\partial g_{\alpha\beta}}=&-G ^{\alpha\beta,\mu}_{g\;\mu}+F_{g\;ab,\mu,\nu}\frac{\partial L_{g}^{ab,\mu\nu}}{\partial g_{\alpha\beta}}+F_{\phi\;\mu,\nu}\frac{\partial L_{\phi}^{,\mu\nu}}{\partial g_{\alpha\beta}}\label{eqn:hamiltoninicio}
\\
\frac{\partial H_{\mathfrak V}}{\partial g_{\alpha\beta,\mu}}=&F_{g\;ab,\nu}\frac{\partial L_g^{\alpha\beta,\mu\nu}}{\partial g_{ab}}\\
\frac{\partial H_{\mathfrak V}}{\partial p_g^{\alpha\beta,\mu}}=&F_{g\;\alpha\beta,\nu}\\
\frac{\partial H_{\mathfrak V}}{\partial \phi}=&-G_{\phi\;\mu}^{,\mu}+F_{\phi\;\mu\nu}\frac{\partial L_{\phi}^{,\mu\nu}}{\partial \phi}\\
\frac{\partial H_{\mathfrak V}}{\partial \phi_{;\mu}}=&F_{\phi\;,\nu}\frac{\partial L_{\phi}^{,\mu\nu}}{\partial \phi}-g^{\rho\sigma}\phi_{;\sigma}F_{\phi\;\rho,\nu}\frac{\partial L_{\phi}^{,\mu\nu}}{\partial X}+F_{g\;ab,\nu}\frac{\partial L_{\phi}^{,\mu\nu}}{\partial g_{ab}}\\
\frac{\partial H_{\mathfrak V}}{\partial p_{\phi}^{,\mu}}=&F_{\phi\;,\mu}\label{eqn:hamiltonfin}.
\end{align}

These are the covariant Hamilton equations for the cubic Horndeski's theory. Depending on the values of $G_2(\phi,X)$ and $G_3(\phi,X)$, these equations may not be compatible and the constraint algorithm should be continued. The dynamics of the Hamiltonian theory are determined by the Hamilton equations \ref{eqn:hamiltoninicio} to (\ref{eqn:hamiltonfin}) and the tangency condition of the multivector field to the constraint (\ref{tangenciaconstraint}).


\section{Conclusion}
\label{Section6}

In this work, we presented a multisymplectic covariant description of the cubic Horndeski theory using the unified Lagrangian-Hamiltonian formalism. The constraint algorithm was employed to determine a submanifold of the higher-order jet-multimomentum bundle $\mathcal{W}_r$ and the corresponding constraints that provide the main features of the theory. 

The constraints (\ref{eqn:UniVec5}), (\ref{eqn:UniVec5}), (\ref{Wlag1}) and (\ref{Wlag2}) appear as a consequence of this formalism and define the Legendre map which further allows to pose a covariant Hamiltonian formulation and the corresponding Hamilton-de Donder-Weyl-like equations of the theory. Although more constraints appear, they have no physical relevance and are a mere consequence of the projectability of the theory, as we would expect a second order Lagrangian to produce fourth-order equations of motion, but it does produce second-order equations of motion. 

We showed that the Poincaré-Cartan form of the theory form does not necessarily project onto $J^1\pi$, unless $\frac{\partial G_3(\phi,X)}{\partial X}=0$. This makes it impossible, in general, to obtain a covariant Hamiltonian formulation with first order equations of motion. Moreover, this is a counterexample that proves that the projectability of the equations does not implies the projectability of the geometric structures. Hence, the reciprocal of proposition 1 in \cite{gaset_order_2016} does not hold.

With extra assumptions on the Lagrangian, we provide the expression of the velocities in function of the momenta, providing a covariant formulation of the Hamiltonian formalism which involves only multimomenta. We also present situation where this is not possible. 

For a general cubic Horndeski's theory, we provide the covariant Hamiltonian formalism and we present the field equations. In general, they involve velocities of the metric and the scalar field, as well as the accelerations of the scalar field.


It is yet to be seen the map between our covariant formulation and the instantaneous, or ADM-like, Hamiltonian formulation of these theories already presented in \cite{Kovacs1}. Recently, a proof of the equivalence of the symplectic forms derived from the canonical and the covariant phase space formalisms was presented in \cite{Margalef}. It is yet to be proven if there's a similar equivalence between the symplectic forms derived from the instantaneous and the multisymplectic forms in the multisymplectic formalism. If they are equivalent, it will be needed to determine how to map all these forms. This  will be explored in a future work.



\section*{Acknowledgments}

The authors acknowledge financial support from the Ministerio de Ciencia, Innovaci\'on y Universidades (Spain), projects PGC2018-098265-B-C33 and D2021-125515NB-21.



\begin{thebibliography}{9}

{\small

\bibitem{Cantrijn1}
F. Cantrijn, A. Ibort, M. de León,
``Hamiltonian structures on multisymplectic manifolds'',
{\sl Rend. Sem. Mat. Univ. Politec. Torino} {\bf 54} (3) (1996) 225-236.


\bibitem{Cantrijn2}
F. Cantrijn, A. Ibort, M. de León,
``On the geometry of multisymplectic manifolds'',
{\sl J. Austral. Math. Soc. Ser.} A {\bf 66} (3) (1999) 303-330.
(doi: 10.1017/S1446788700036636).

\bibitem{Echeverria-Enriquez1}
A. Echeverría-Enríquez, M. De León, M.C. Muñoz-Lecanda, N. Román-Roy,
``Extended Hamiltonian systems in multisymplectic field theories'',
{\sl J. Math. Phys.} {\bf 48} (11) (2007) 112901.
(doi: /10.1063/1.2801875).

\bibitem{Echeverria-Enriquez2}
A. Echeverría-Enríquez, A. Ibort, M.C. Muñoz-Lecanda, N. Román-Roy,
``Invariant forms and automorphisms of locally homogeneous multisymplectic manifolds'',
{\sl J. Geom. Mech.} {\bf 4} (4) (2012) 397-419.
(doi: 10.3934/jgm.2012.4.397).

\bibitem{Carinena}
J.F. Carinena, M. Crampin, L.A. Ibort,
``On the multisymplectic formalism for first order field theories'',
{\sl Differ.Geom.Appl.} {\bf 1} (1992) 345-374
(doi: 10.1016/0926-2245(91)90013-Y).

\bibitem{Forger}
M. Forger, C. Paufler, H. Roemer,
``The Poisson bracket for Poisson forms in multisymplectic field theory'',
{\sl Rev.Math.Phys.} {\bf 15} (2003) 705-744
(doi: 10.1142/S0129055X03001734).

\bibitem{gaset_variational_2016}
J. Gaset, P.D. Prieto-Martínez, N. Román Roy,
``Variational principles and symmetries on fibered multisymplectic manifolds'',
{\sl Communications in Mathematics} {\bf 24} (2) (2016) 23361298
(doi: 10.1515/cm-2016-0010).

\bibitem{Gimmsy}
M. Gotay, J. Isenberg, J.E. Marsden,
``Momentum maps and classical relativistic fields. Part 1: Covariant Field Theory'',
 arXiv:physics/9801019 [math-ph] (2004).

\bibitem{RR2009}
N. Román Roy,
``Multisymplectic Lagrangian and Hamiltonian formalisms of classical field theories'',
{\sl Symmetry, Integrability and Geometry: Methods and Applications} {\bf 5} (2009) 100
(doi: 10.3842/SIGMA.2009.100).


\bibitem{Ryvkin}
L. Ryvkin, T. Wurzbacher,
``An invitation to multisymplectic geometry'',
{\sl J. Geom. Phys.} {\bf 142} (2019) 9-36.
(doi: 10.1016/j.geomphys.2019.03.006).

\bibitem{pere1}
  P.D. Prieto Mart\'inez, N. Rom\'an-Roy,
  ``Variational principles for multisymplectic second-order classical field theories'',
{\sl Int. J. Geom. Meth. Mod. Phys.} {\bf 12}(8) (2015) 1560019.
(doi: 10.1515/cm-2016-0010).

\bibitem{Cedric}
  C.M. Campos, M. de Leon, D.M. de Diego, J. Vankerschaver, 
  ``Unambiguous Formalism for Higher-Order Lagrangian Field Theories'',
{\sl J.Phys.A} {\bf 42} (2009) 475207 1560019.
(doi: 10.1515/cm-2016-0010).

\bibitem{vitagliano2010}
  L. Vitagliano,
  ``The Lagrangian-Hamiltonian formalism for higher order field theories'',
{\sl J. Geom. Phys.} {\bf 60} (6-8) (2010) 857-873.
(doi: 10.1016/j.geomphys.2010.02.003).


\bibitem{Skinner1}
  R. Skinner
  ``First-order equations of motion for classical mechanics'',
{\sl J. Math. Phys.} {\bf 24} (1983) 2581–
2588.
(doi: 10.1063/1.525653).

\bibitem{Skinner2}
  R. Skinner, R. Rusk,
  ``Generalized Hamiltonian dynamics. I. Formulation on $T^{\ast}Q \oplus TQ$'',
{\sl J. Math. Phys.} {\bf 24} (1983) 2589–2594.
(doi: 10.1063/1.525654).

\bibitem{Skinner3}
  R. Skinner, R. Rusk,
  ``Generalized Hamiltonian dynamics. II. Gauge transformations'',
{\sl J. Math. Phys.} {\bf 24} (1983) 2595–2601.
(doi: 10.1063/1.525655).

\bibitem{pere2}
  P.D. Prieto Mart\'inez, N. Rom\'an-Roy,
  ``A new multisymplectic unified formalism for second-order classical field theories'',
{\sl J. Geom. Mech.} {\bf 7}(2) (2015) 203-253.
(doi:	10.3934/jgm.2015.7.203).

\bibitem{deLeon}
  M. de León, J. Marín-Solano, J.C. Marrero, M.C. Muñoz-Lecanda,
  ``Premultisymplectic constraint algorithm for field theories'',
{\sl Int. J. Geom. Meth. Mod. Phys.} {\bf 2}(5) (2005) 839–871.
(doi:	10.1142/S0219887805000880).

\bibitem{Echeverria}
A. Echeverría-Enríquez, C. López, J. Marín-Solano, M.C. Muñoz-Lecanda, N. Román-Roy,
``Lagrangian-Hamiltonian unified formalism for field theory'',
{\sl J. Math. Phys.} {\bf 45} (1) (2004) 360380.
(doi: 10.1063/1.1628384).

\bibitem{ADM}
  R. Arnowitt, S. Deser, C. Misner
  ``Dynamical Structure and Definition of Energy in General Relativity'',
{\sl Phys Rev} {\bf 116} (5) (1959) 1322-1330.
(doi: 10.1103/PhysRev.116.1322).

\bibitem{gotay_multisymplectic_1991}
  M. Gotay,
  ``A multisymplectic framework for classical field theory and the calculus of variations {II}: space + time decomposition'',
{\sl Differ. Geom. Appl. } (1) (1991) 9262245
(doi: 10.1016/0926-2245(91)90014-Z).

\bibitem{Margalef}
J. Margalef-Bentabol, E.J.S. Villaseñor, ``Proof of the equivalence of the symplectic forms derived from the canonical and the covariant phase space formalisms'', 
{\sl Physical Review D} 105 (2022) L101701 
(doi: 10.1103/PhysRevD.105.L101701).


\bibitem{Klauder}
J.R. Klauder,
``On the meaning of a non-renormalizable theory of gravitation'',
{\sl Gen. Relat. Gravit.} {\bf 6}  (1975) 13-19.
(doi: 10.1007/BF00766595).

\bibitem{Yunes}
N. Yunes, K. Yagi, F. Pretorius
``Theoretical Physics Implications of the Binary Black-Hole Mergers GW150914 and GW151226'',
{\sl Phys.Rev.D} {\bf 94} (8) (2016) 084002.
(doi: 10.1103/PhysRevD.94.084002).



\bibitem{Kobayashi}
T. Kobayashi, M. Yamaguchi, J. Yokoyama,
``Generalized G-inflation: Inflation with the most general second-order field equations'',
{\sl Prog. Theor. Phys.} {\bf 126}(3) (2011) 511-529.
(doi: 10.1143/PTP.126.511).

\bibitem{Kovacs1}
A. Kovacs, H. Reall,
``Well-posed formulation of Lovelock and Horndeski theories'',
{\sl Phys. Rev. D} {\bf 101} (2020) 124003.
(doi: 10.1103/PhysRevD.101.124003).

\bibitem{Reall1}
A. Kovacs, H. Reall,
``Causality in gravitational theories with second order equations of motion'',
{\sl Phys. Rev. D} {\bf 103} (8) (2021) 084027.
(doi: 10.1103/PhysRevD.103.084027).

\bibitem{Kovacs2}
A. Kovacs, 
``Well-posedness of cubic Horndeski theories'',
{\sl Phys. Rev. D} {\bf 100} (2019), 024005.
(doi: 10.1103/PhysRevD.100.024005).



\bibitem{Ijjas}
A. Ijjas, P. J. Steinhardt
``Classically Stable Nonsingular Cosmological Bounces'',
{\sl Phys. Rev. Lett.} {\bf 117} (8) (2016) 121304.
(doi: 10.1103/PhysRevD.103.084027).

\bibitem{GasetGR}
J. Gaset, N. Román-Roy
``Multisymplectic unified formalism for
Einstein-Hilbert Gravity'',
{\sl J.Math.Phys} {\bf 59} (3) (2018) 032502.
(doi: 10.1063/1.4998526).

\bibitem{Vey}
D. Vey,
``Multisymplectic formulation of vielbein gravity: I. De Donder-Weyl formulation, Hamiltonian $(n-1)$-forms'',
{\sl Class.Quant.Grav. }  (32) (2015) 9, 095005
(doi: 10.1088/0264-9381/32/9/095005).

\bibitem{capriotti_unified_2018}
S. Capriotti,
``Unified formalism for Palatini gravity'',
{\sl Int. J. Geom. Meth. Mod. Phys. }  (15) (2018) , 1850044
(doi: 10.1142/S0219887818500445).

\bibitem{gaset_new_2019}
J. Gaset, N. Román-Roy,
``New multisymplectic approach to the metric-affine Einstein- Palatini action for gravity'',
{\sl J. Geom. Mech.} {\bf 11} (3) (2019) , 361-396
(doi: 10.3934/jgm.2019019).


\bibitem{Capriotti}
S. Capriotti, J. Gaset, N. Román-Roy, L. Salomone
``Griffiths Variational Multisymplectic Formulation for Lovelock Gravity'',
{\sl Gen.Rel.Grav. }  {\bf 52} (8) (2020), 74
(doi: 10.1007/s10714-020-02725-8).

\bibitem{Arnoldo}
J. Gomis, A. Guerra, N. Román-Roy
``Multisymplectic Constraint Analysis of Scalar Field Theories, Chern-Simons Gravity, and Bosonic String Theory'',
 arXiv:2210.07698  [math-ph] (2022).




\bibitem{book:Saunders89}
D.J. {Saunders}, \textsl{The geometry of jet bundles}, London Mathematical
  Society, Lecture notes series, vol. 142, Cambridge University Press,
  Cambridge, New York 1989.
(ISBN-13: 978-0521369480).

\bibitem{art:Saunders_Crampin90}
D.J. {Saunders} and M.~{Crampin}, ``On the {Legendre} map in higher-order field
  theories'', \textsl{J. Phys. A: Math. Gen.} \textbf{23}(14) (1990) 3169--3182.
(doi:10.1088/0305-4470/23/14/016).


\bibitem{art:Roman09}
N. Rom\'{a}n-Roy,
``Multisymplectic {Lagrangian} and {Hamiltonian} formalisms of classical field theories'',
\textsl{Symm. Integ. Geom. Methods Appl. (SIGMA)} \textbf{5} (2009) 100, 25pp.
(doi: 10.3842/SIGMA.2009.100).


\bibitem{Rosado}
E. Rosado-María, J. Muñoz-Masqué,
``Integrability of second-order Lagrangians admitting a first-order Hamiltonian formalism'',
\textsl{Differ Geom Appl} \textbf{35} (2014) 164-177.
(doi: 10.1016/j.difgeo.2014.04.006).

\bibitem{rosado2}
M.E. Rosado, J. Mu\~noz-Masqu\'e, ``Second-order Lagrangians admitting a first-order Hamiltonian formalism'', 
{\sl J. Annali di Matematica} (2017) 1-41. 
(doi: 10.1007/s10231-017-0683-y).

\bibitem{gaset_order_2016}
J. Gaset, N. Román-Roy, ``Order Reduction, projectability and constraints of second-order field theories and higher-order mechanics'', 
{\sl Rep. Math. Phys.} \textbf{78} (3) (2016) 327-337. 
(doi: 10.1016/S0034-4877(17)30012-5).

\bibitem{gaset_geometric_2022}
J. Gaset, ``Geometric Gauge Freedom in Multisymplectic Field Theories'', (2022)  
(doi: 10.48550/arXiv.2209.11212).

}

\end{thebibliography}
\end{document}